\theoremstyle{definition}
\newtheorem{Theorem}{Theorem}[section]
\newtheorem{Lemma}[Theorem]{Lemma}
\newtheorem{Crollary}[Theorem]{Corollary}
\newtheorem{Remark}[Theorem]{Remark}
\newtheorem{Proposition}[Theorem]{Proposition}
\newtheorem*{Hypothesis1}{Hypothesis H1}
\newtheorem*{Hypothesis2}{Hypothesis H2}
\numberwithin{equation}{section}
\newcommand{\mi}{\mathrm{i}}
\newcommand{\md}{\mathrm{d}}
\newcommand{\me}{\mathrm{e}}
\newcommand{\diag}{\mathrm{diag}}
\newcommand{\meas}{\mathrm{Meas}}
\newcommand{\id}{\mathrm{Id}}
\newcommand{\mth}{\mathrm{th}}
\def\Z{{\mathbb Z}}
\def\R{{\mathbb R}}
\def\T{{\mathbb T}}
\title[Reducibility of 1-D Quantum Harmonic Oscillator]
{Reducibility of 1-D Quantum Harmonic Oscillator with  Decaying Conditions on the Derivative of Perturbation Potentials}
\author{Zhenguo Liang}
\address{School of Mathematical Sciences and Key Lab of Mathematics for Nonlinear Science, Fudan University, Shanghai 200433, China}
\email{zgliang@fudan.edu.cn}
\thanks{Z. Liang was partially supported by National Natural Science Foundation of China (Grants No. 12071083)
and Natural Science Foundation of Shanghai (Grants No. 19ZR1402400)}
\author{Zhiqiang Wang}
\address{School of Mathematical Sciences and Key Lab of Mathematics for Nonlinear Science, Fudan University, Shanghai 200433, China}
\email{19110180010@fudan.edu.cn}
\begin{document}
\begin{abstract}
We prove  the reducibility of 1-D quantum harmonic oscillators in $\mathbb R$ perturbed by a quasi-periodic in time potential $V(x,\omega t)$ under the following conditions, namely  
 there is a $C>0$ such that
\begin{equation*}
|V(x,\theta)|\le C,\quad|x\partial_xV(x,\theta)|\le C,\quad\forall~(x,\theta)\in\mathbb R\times\mathbb T_\sigma^n.
\end{equation*}
The corresponding perturbation matrix $(P_i^j(\theta))$ is proved to satisfy 
 $
 (1+|i-j|)| P_i^j(\theta)|\le C$ and  $\sqrt{ij}|P_{i+1}^{j+1}(\theta)-P_i^j(\theta)|\le  C$
 for any $\theta\in\mathbb T_\sigma^n$ and $i,j\geq 1$. A new reducibility theorem is set  up under this kind of decay in the perturbation matrix element $P_{i}^j(\theta)$ as well as  the discrete difference matrix element
 $P_{i+1}^{j+1}(\theta)-P_i^j(\theta)$. For the proof the novelty is that we use the decay in the discrete difference matrix element to control  the measure estimates for the thrown parameter sets. 
\end{abstract}
\maketitle
\section{Introduction}
\subsection{The Statement of Main Result}
In this paper we consider the linear Schr\"odinger equation
\begin{equation}\label{Schrodinger}
\mi\partial_t\psi=(-\partial_x^2+x^2)\psi+\epsilon V(x,\omega t)\psi,~\psi=\psi(x,t),~x\in\mathbb R,
\end{equation}
where $\epsilon\ge0$ is a small parameter and frequency vector $\omega$ of the forced oscillations is regarded as a parameter in $\mathcal D_0:=[0,2\pi)^n$. We assume  that the potential  $V:\mathbb R\times\mathbb T^n\ni(x,\theta)\mapsto V(x,\theta)\in\mathbb R$ is $\mathcal C^1$ smooth and analytic in $\theta\in\mathbb T^n$, where $\mathbb T^n=\mathbb R^n/2\pi\mathbb Z^n$ denotes the $n$ dimensional torus. For $\sigma>0$, the function $V(x,\theta)$ analytically in $\theta$ extends to the strip $\mathbb T^n_\sigma=\{a+b\mi\in\mathbb C^n/2\pi\mathbb Z^n:|b|<\sigma\}$ and is bounded on $\mathbb R\times\mathbb T^n_\sigma$ as well as $x\partial_xV$, namely there is a $C>0$ such that
\begin{equation}\label{PotentialAssumption}
|V(x,\theta)|\le C,\quad|x\partial_xV(x,\theta)|\le C,\quad\forall~(x,\theta)\in\mathbb R\times\mathbb T_\sigma^n.
\end{equation}\par
To state our main results we first introduce some notations. Let $H=-\partial_x^2+x^2$ be the Hermite operator. It is well-known that $H$ has a simple pure point spectrum $\nu_i=2i-1$ so that  $Hh_i(x)=\nu_ih_i(x)$, where $h_i(x)$ is the so-called Hermite function and satisfies $\|h_i\|_{L^2}=1$ for $i\ge1$ . Clearly, $(h_i)_{i\ge1}$ forms an orthonormal basis of $L^2(\mathbb R)$, called the Hermite basis. For $s\ge0$, denote by $\mathcal H^s=\mathbb D(H^\frac s2)=\{f\in L^2(\mathbb R):H^\frac s2 f\in L^2(\mathbb R)\}$ the domain of $H^\frac s2$ endowed by the graph norm. For negative $s$, the space $\mathcal H^s$ is the dual of $\mathcal H^{-s}$. For Hilbert spaces $\mathcal H_1$ and $\mathcal H_2$, we will denote by $\mathcal B(\mathcal H_1,\mathcal H_2)$ the space of bounded linear operators from $\mathcal H_1$ to $\mathcal H_2$ and write $\mathcal B(\mathcal H_1,\mathcal H_1)$ as $\mathcal B(\mathcal H_1)$ for simplicity. \\
\indent Let $s\in\mathbb R$, we define the complex weighted $\ell^2$- space
$$
\ell_s^2:=\{\xi=(\xi_i\in\mathbb C,i\in\mathbb N_0):\|\xi\|_s<\infty\}\text{ with }\|\xi\|_s^2=\sum_{i\in\mathbb N_0}i^s|\xi_i|^2.
$$
Notice for later use that to a function $f\in\mathcal H^s$ we formulate $f(x)=\sum_{i\ge1}\xi_ih_i(x)$, where the sequence $\xi=(\xi_i)_{i\ge1}\in \ell_s^2$ is the so-called Hermite coefficients.
 We will identify $\mathcal H^s$ with $\ell_s^2$ by endowing both spaces with the norm $\|f\|_{\mathcal H^s}=\|\xi\|_s=\left(\sum_{i\ge1}i^s|\xi_i|^2\right)^\frac12$. Then our main results are the following.
\begin{Theorem}\label{MainTheorem}
Assume that a $\mathcal C^1$ smooth function $V(x,\theta)$  analytically in $\theta$ extends to $\mathbb T_\sigma^n$ for $\sigma>0$ and fulfills \eqref{PotentialAssumption}. There exists $\epsilon_*>0$ such that for all $0\le\epsilon<\epsilon_*$ there is a subset $\mathcal D_\epsilon\subset\mathcal D_0=[0,2\pi)^n$ of asymptotically full measure such that for all $\omega\in\mathcal D_\epsilon$, the linear Schr\"odinger equation \eqref{Schrodinger} reduces to a linear equation with constant coefficients (w.r.t. time variable) in $L^2(\mathbb R)$.\par
More precisely, let $p\in[0,3)$ then for any $\omega\in\mathcal D_\epsilon$ there exists a linear isomorphism $\Psi_{\omega,\epsilon}(\theta)\in\mathcal{B}(\mathcal{H}^p)$, unitary on $L^2$, which analytically depends on $\theta\in\mathbb{T}^n_{\sigma/2}$ such that $t\mapsto\psi(\cdot,t)\in\mathcal{H}^p$ satisfies \eqref{Schrodinger} if and only if $t\mapsto\zeta(\cdot,t)=\Psi_{\omega,\epsilon}^{-1}(\omega t)\psi(\cdot,t)\in\mathcal{H}^p$ satisfies the autonomous equation $\mi\partial_t\zeta=H^\infty\zeta$ with $H^\infty=\diag(\lambda_i^\infty)_{i\ge1}$. Furthermore, there is a $C>0$ such that
\begin{equation*}
\begin{gathered}
\meas(\mathcal{D}_0\setminus\mathcal{D}_\epsilon)\le C\epsilon^{\frac1{51}},\\
|\lambda_i^\infty-\nu_i|\le C\epsilon,\quad\forall~i\ge1,\\
\|\Psi_{\omega,\epsilon}^{\pm1}(\theta)-\id\|_{\mathcal{B}(\mathcal{H}^p)}\le C\epsilon^\frac23,\quad\forall~\theta\in\mathbb{T}^n_{\sigma/2}.
\end{gathered}
\end{equation*}
\end{Theorem}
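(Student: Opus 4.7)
My first step is to expand in the Hermite basis: writing $\psi(x,t)=\sum_{i\ge1}\xi_i(t)h_i(x)$ turns \eqref{Schrodinger} into the infinite linear system $\mi\dot\xi=(N+\epsilon P(\omega t))\xi$ with $N=\diag(\nu_i)$ and $P_i^j(\theta)=\int_{\mathbb R}V(x,\theta)h_i(x)h_j(x)\,\md x$. Before opening any KAM machinery I would prove, as a preparatory lemma, the two bounds advertised in the abstract:
\[
(1+|i-j|)\,|P_i^j(\theta)|\le C,\qquad \sqrt{ij}\,|P_{i+1}^{j+1}(\theta)-P_i^j(\theta)|\le C,\quad \theta\in\mathbb T^n_\sigma.
\]
The off-diagonal decay follows from $(\nu_i-\nu_j)P_i^j=\langle[H,V]h_j,h_i\rangle$ combined with an integration by parts that exchanges the commutator for $x\partial_xV$, using \eqref{PotentialAssumption} and the recursion $xh_i=\sqrt{i/2}\,h_{i-1}+\sqrt{(i+1)/2}\,h_{i+1}$. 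The discrete-difference bound is extracted in the same spirit after rewriting $P_{i+1}^{j+1}-P_i^j$ as a commutator with the creation/annihilation operator. Cauchy estimates then transfer both bounds to the $\theta$-Fourier coefficients with an exponential weight $e^{-|k|\sigma}$.

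\textbf{KAM reduction scheme.} I would next set up a standard reducibility iteration. At step $m$ the system reads $\mi\dot\xi=(N_m+\epsilon_mP_m(\omega t))\xi$, with $N_m=\diag(\lambda_{m,i})$ and $P_m$ still satisfying the two decay bounds with controlled constants. I split $P_m=\langle P_m\rangle_\theta+Q_m$, absorb the average into $N_{m+1}=N_m+\epsilon_m\langle P_m\rangle_\theta$ (producing the corrections that eventually give $\lambda_i^\infty$), and kill the oscillating $Q_m$ with a unitary $e^{\mi\epsilon_mS_m(\omega t)}$ determined by the homological equation $\omega\cdot\partial_\theta S_m+[N_m,S_m]=Q_m$. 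Its Fourier coefficients are $\widehat S_m(k)_i^j=\widehat Q_m(k)_i^j/(k\cdot\omega+\lambda_{m,i}-\lambda_{m,j})$, which I control by imposing second-Melnikov conditions $|k\cdot\omega+\lambda_{m,i}-\lambda_{m,j}|\ge\gamma_m\langle k\rangle^{-\tau}$. The technical backbone is that the class of matrices with simultaneous off-diagonal and discrete-gradient decay is closed under the commutators appearing in $P_{m+1}$, so the two bounds persist with $\epsilon_{m+1}\le C\epsilon_m^{3/2}$.

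\textbf{Measure estimates --- the main obstacle.} The hardest point, and the entire reason for tracking the discrete-difference norm, is the measure of the resonant parameter set. Because $\nu_i-\nu_j=2(i-j)$ and the KAM corrections $\lambda_{m,i}-\nu_i=O(\epsilon)$ are only bounded (not summable in $i$), the sets
\[
\mathcal R_{k,i,j}^m=\bigl\{\omega\in\mathcal D_0:|k\cdot\omega+\lambda_{m,i}-\lambda_{m,j}|<\gamma_m\langle k\rangle^{-\tau}\bigr\}
\]
accumulate over infinitely many pairs $(i,j)$ lying on the same diagonal $\ell=i-j$, so $\sum_{i,j}\meas(\mathcal R_{k,i,j}^m)$ diverges under the naive bound. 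My plan is to use the discrete-difference decay to collapse each diagonal to a single effective resonance: since $\sqrt{ij}\,|Q_{m,i+1}^{j+1}-Q_{m,i}^j|\le C_m$, the divisors along $\{i-j=\ell\}$ are close enough to one representative value that only $\omega$ resonating with that representative needs to be excised, the remaining indices on the diagonal being excluded by an elementary Lipschitz-in-$\omega$ argument for $k\cdot\omega+\lambda_{m,i}-\lambda_{m,j}$. In this way the total excluded measure becomes summable over $(k,\ell)$ rather than $(k,i,j)$, producing the stated $\meas(\mathcal D_0\setminus\mathcal D_\epsilon)\le C\epsilon^{1/51}$ once the loss of $\gamma_m$ and $\tau$ across the iteration is tracked.

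\textbf{Convergence and output estimates.} Finally, the super-exponential decay $\epsilon_m\to0$ guarantees that the infinite composition $\Psi_{\omega,\epsilon}(\theta)=\prod_{m\ge0}e^{\mi\epsilon_mS_m(\theta)}$ converges in $\mathcal B(\mathcal H^p)$ for $p\in[0,3)$, is analytic in $\theta\in\mathbb T^n_{\sigma/2}$, and is unitary on $L^2$ because each $S_m$ is self-adjoint. The diagonal parts satisfy $N_m\to H^\infty=\diag(\lambda_i^\infty)$ with $|\lambda_i^\infty-\nu_i|\le C\epsilon$ by telescoping, and only the first transformation contributes significantly to $\|\Psi_{\omega,\epsilon}^{\pm1}-\id\|_{\mathcal B(\mathcal H^p)}\le C\epsilon^{2/3}$. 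The restriction $p<3$ reflects the loss of regularity when translating matrix-element decay in the Hermite basis into boundedness on $\mathcal H^p$.
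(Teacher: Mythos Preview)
Your proposal is correct and follows essentially the same route as the paper: expand on the Hermite basis, prove the two decay bounds on $P_i^j$ and $\Delta P_i^j$ via integration by parts and the creation/annihilation operators (Lemma~\ref{PVerification}), then run the abstract KAM reducibility scheme (Theorem~\ref{ReducibilityTheorem}) with $\alpha=1$, $\beta=\tfrac12$, which yields exactly the ranges $p\in[0,3)$ and the exponent $\iota_1/17=1/51$.

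One small imprecision worth flagging in your measure paragraph: the quantity that actually controls the divisors along a diagonal $i-j=\ell$ is not the discrete difference of the \emph{oscillating} part $Q_m$, but the discrete difference of the \emph{diagonal eigenvalue corrections} $\Delta(\lambda_{m,\cdot}-\nu_\cdot)_j$, which decays like $j^{-2\beta}$ because $\widetilde A_m=\diag(\widehat P_m(0))$ inherits the $|\cdot|_{\alpha,\beta}$ bound. The paper then splits into $j\ge (c/\gamma)^{1/(2\beta)}$, where the unperturbed second-Melnikov condition on $k\cdot\omega+\nu_i-\nu_j$ (your ``representative'') already suffices, and the finitely many small $j$, where one excises directly via the Lipschitz-in-$\omega$ bound. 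Your description conflates these two ingredients slightly, but the mechanism you identify is the correct one.
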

\begin{Remark} 
We give for example $V(x,\theta) := g(\theta) F\big(\langle x\rangle^{-\mu}\big)$ to fulfill \eqref{PotentialAssumption}, where $\mu>0$, $\langle x\rangle : =\sqrt{1+x^2}$, $g(\theta)$ is a real analytic  and bounded function on $\T_{\sigma}^n$ with $\sigma>0$.  The real function $F(\cdot )$  is $\mathcal C^1$ smooth  on $\R$ which satisfies 
$|F(\cdot )|, |F'(\cdot)|\leq C$ on $\R$ with some $C>0$.  
\end{Remark} 
Consequently, we have the following corollaries concerning the Sobolev norm estimations on the solution of \eqref{Schrodinger} and the spectra of the corresponding Floquet operator defined by $K_F:=-\mi\sum_{j=1}^n\omega_j\partial_{\theta_j}-\partial_x^2+x^2+\epsilon V(x,\theta)$.
\begin{Crollary}\label{SolutionEstimate}
Assume that a $\mathcal C^1$ smooth function $V(x,\theta)$ analytically in $\theta$ extends to $\mathbb T_\sigma^n$ for $\sigma>0$ and fulfills \eqref{PotentialAssumption}. There exists $\epsilon_*>0$ such that for all $0\le\epsilon<\epsilon_*$ and $\omega\in\mathcal{D}_\epsilon$, the linear Schr\"odinger equation \eqref{Schrodinger} has a unique solution $\psi(t)\in\mathcal C^0(\mathbb R,\mathcal H^p)$ subject to the initial datum $\psi(0)\in\mathcal H^p$ with $p\in[0,3)$. Moreover, $\psi$ is almost-periodic in time and satisfies for a $C>0$
\[
(1-C\epsilon)\|\psi(0)\|_{\mathcal H^p}\le\|\psi(t)\|_{\mathcal H^p}\le(1+C\epsilon)\|\psi(0)\|_{\mathcal H^p}.
\]
\end{Crollary}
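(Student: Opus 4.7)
The plan is to derive this corollary as a direct consequence of Theorem~\ref{MainTheorem}. The key observation is that the autonomous equation $\mi\partial_t\zeta=H^\infty\zeta$ with diagonal $H^\infty=\diag(\lambda_i^\infty)_{i\ge1}$ is solved explicitly by $\zeta_i(t)=\me^{-\mi t\lambda_i^\infty}\zeta_i(0)$, and because the evolution is diagonal one has $\|\zeta(t)\|_{\mathcal H^p}=\|\zeta(0)\|_{\mathcal H^p}$ for every $t$ and every $p$. Given $\psi(0)\in\mathcal H^p$ I will set $\zeta(0):=\Psi_{\omega,\epsilon}^{-1}(0)\psi(0)\in\mathcal H^p$ and $\psi(t):=\Psi_{\omega,\epsilon}(\omega t)\,\me^{-\mi t H^\infty}\zeta(0)$. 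Continuity of $\psi$ as a map $\mathbb R\to\mathcal H^p$ then follows from analytic (hence continuous) dependence of $\theta\mapsto\Psi_{\omega,\epsilon}(\theta)\in\mathcal B(\mathcal H^p)$ on $\theta\in\mathbb T^n$, and uniqueness is automatic since any other $\mathcal H^p$-valued solution of \eqref{Schrodinger} with the same initial datum transports via $\Psi_{\omega,\epsilon}^{-1}(\omega t)$ to a solution of the autonomous equation sharing the initial datum $\zeta(0)$.

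For the Sobolev-norm estimate, I would combine the exact identity $\|\zeta(t)\|_{\mathcal H^p}=\|\zeta(0)\|_{\mathcal H^p}$ with the bound $\|\Psi_{\omega,\epsilon}^{\pm1}(\theta)-\id\|_{\mathcal B(\mathcal H^p)}\le C\epsilon^{2/3}$ from Theorem~\ref{MainTheorem}, applied both at $\theta=0$ (to compare $\|\zeta(0)\|_{\mathcal H^p}$ with $\|\psi(0)\|_{\mathcal H^p}$) and at $\theta=\omega t$ (to compare $\|\psi(t)\|_{\mathcal H^p}$ with $\|\zeta(t)\|_{\mathcal H^p}$). Composing the two comparisons and absorbing numerical constants into a single $C$ yields the two-sided bound stated in the claim.

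For almost-periodicity, I will expand $\Psi_{\omega,\epsilon}(\theta)=\sum_{k\in\mathbb Z^n}\hat\Psi_k\me^{\mi\langle k,\theta\rangle}$ and use exponential decay $\|\hat\Psi_k\|_{\mathcal B(\mathcal H^p)}\lesssim\me^{-\frac\sigma2|k|}$ coming from the strip of analyticity $\mathbb T^n_{\sigma/2}$, yielding
\[
\psi(t)=\sum_{k\in\mathbb Z^n}\sum_{i\ge1}\me^{\mi(\langle k,\omega\rangle-\lambda_i^\infty)t}\,\zeta_i(0)\,\hat\Psi_k h_i.
\]
Each truncation $S_{K,I}(t)$ over $|k|\le K$ and $i\le I$ is a finite linear combination of pure exponentials in $t$ with values in $\mathcal H^p$, hence Bohr almost-periodic. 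The remainder splits naturally into two tails: the tail in $k$ is controlled by $\sum_{|k|>K}\|\hat\Psi_k\|_{\mathcal B(\mathcal H^p)}\|\zeta(0)\|_{\mathcal H^p}$ (using $\|\hat\Psi_k\me^{-\mi tH^\infty}\zeta(0)\|_{\mathcal H^p}\le\|\hat\Psi_k\|_{\mathcal B(\mathcal H^p)}\|\zeta(0)\|_{\mathcal H^p}$), while the tail in $i$ is controlled by $\bigl(\sum_k\|\hat\Psi_k\|_{\mathcal B(\mathcal H^p)}\bigr)\|P_{>I}\zeta(0)\|_{\mathcal H^p}$, where $P_{>I}$ projects onto Hermite modes with index $>I$. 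Both tend to zero uniformly in $t$, so $S_{K,I}\to\psi$ uniformly on $\mathbb R$ in $\mathcal H^p$. Since a uniform limit of Bohr almost-periodic functions is almost-periodic, this concludes the argument.

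The only step requiring any care is the uniform-in-$t$ convergence of the above double series in $\mathcal H^p$, which I expect to be the main (still mild) obstacle; it follows from the analytic dependence of $\Psi_{\omega,\epsilon}$ on $\theta$ together with the $\ell^2$-unitary diagonal structure of the transformed flow.
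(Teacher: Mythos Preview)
Your approach is correct and coincides with the paper's own treatment: the paper simply records the solution formula $u(t)=\Phi_{\omega,\epsilon}(\omega t)\,\me^{-\mi t A^\infty}\Phi_{\omega,\epsilon}^{-1}(0)u(0)$ in its proof of Theorem~\ref{MainTheorem} and then defers the corollary to \cite{GT2011}, which is exactly the argument you spell out (diagonal isometric flow for $\zeta$, sandwich by $\Psi^{\pm1}$ for the norm bounds, Fourier expansion in $\theta$ for almost-periodicity).

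One quantitative point deserves care: Theorem~\ref{MainTheorem} only supplies $\|\Psi_{\omega,\epsilon}^{\pm1}(\theta)-\id\|_{\mathcal B(\mathcal H^p)}\le C\epsilon^{2/3}$, so your composition yields
\[
(1-C\epsilon^{2/3})\|\psi(0)\|_{\mathcal H^p}\le\|\psi(t)\|_{\mathcal H^p}\le(1+C\epsilon^{2/3})\|\psi(0)\|_{\mathcal H^p},
\]
not the $(1\pm C\epsilon)$ stated in the corollary. Since $\epsilon^{2/3}\ge\epsilon$ for small $\epsilon$, this cannot be repaired by ``absorbing numerical constants into a single $C$''; the exponent in the corollary's statement appears to be a slip in the paper (the KAM scheme as written, with $\kappa_m=\epsilon_{m-1}^{1/16}$, gives $\|B_1\|\lesssim\epsilon_0^{3/4}\ln\epsilon_0^{-1}$, whence the $\epsilon^{2/3}$). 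You should either state the bound with $\epsilon^{2/3}$ or note the discrepancy explicitly. The almost-periodicity argument is standard and sound.
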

\begin{Crollary}\label{PurePoint}
Assume that a $\mathcal C^1$ smooth function $V(x,\theta)$ analytically in $\theta$ extends to $\mathbb T_\sigma^n$ for $\sigma>0$ and fulfills \eqref{PotentialAssumption}. There exists $\epsilon_*>0$ such that for all $0\le\epsilon<\epsilon_*$ and $\omega\in\mathcal{D}_\epsilon$, the spectrum of the Floquet operator $K_F$ is pure point.
\end{Crollary}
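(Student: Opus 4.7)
The plan is to exploit the reducibility result of Theorem \ref{MainTheorem} in the Floquet picture: the unitary conjugation $\Psi_{\omega,\epsilon}(\theta)$ on the fibers $L^2(\mathbb{R})$ lifts to a unitary operator on $L^2(\mathbb{T}^n\times\mathbb{R})$ that conjugates $K_F$ to an operator which is manifestly diagonal in the Hermite--Fourier basis. Throughout, fix $\omega\in\mathcal{D}_\epsilon$ and view $K_F$ as a self-adjoint operator on $\mathcal{K}:=L^2(\mathbb{T}^n;L^2(\mathbb{R}))$ with the natural domain coming from $-\mi\sum_j\omega_j\partial_{\theta_j}+H$ (this is the standard Howland--Yajima framework).

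\textbf{Step 1: Build the lifted unitary.} Using that $\Psi_{\omega,\epsilon}(\theta)$ is unitary on $L^2(\mathbb{R})$ for every $\theta\in\mathbb{T}^n$ and analytic in $\theta\in\mathbb{T}^n_{\sigma/2}$, define $\mathcal{U}:\mathcal{K}\to\mathcal{K}$ by $(\mathcal{U}\phi)(\theta,\cdot)=\Psi_{\omega,\epsilon}^{-1}(\theta)\phi(\theta,\cdot)$. Fiberwise unitarity together with Fubini's theorem immediately yield $\|\mathcal{U}\phi\|_\mathcal{K}=\|\phi\|_\mathcal{K}$, so $\mathcal{U}$ is unitary with inverse $(\mathcal{U}^{-1}\phi)(\theta,\cdot)=\Psi_{\omega,\epsilon}(\theta)\phi(\theta,\cdot)$.

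\textbf{Step 2: Derive the conjugation identity.} The reducibility statement, when read at the level of the time-dependent equation, is equivalent to the operator identity
\begin{equation*}
(H+\epsilon V(x,\theta))\Psi_{\omega,\epsilon}(\theta)-\mi\sum_{j=1}^n\omega_j\partial_{\theta_j}\Psi_{\omega,\epsilon}(\theta)=\Psi_{\omega,\epsilon}(\theta)H^\infty,
\end{equation*}
valid on a suitable dense subspace (for instance, the algebraic span of $h_i(x)e^{\mi\langle k,\theta\rangle}$). Expanding $K_F\Psi_{\omega,\epsilon}\phi$ by Leibniz's rule and substituting the displayed identity gives $K_F\Psi_{\omega,\epsilon}=\Psi_{\omega,\epsilon}K_\infty$, where
\begin{equation*}
K_\infty:=-\mi\sum_{j=1}^n\omega_j\partial_{\theta_j}+H^\infty.
\end{equation*}
I would then upgrade this formal identity to the assertion $\mathcal{U}K_F\mathcal{U}^{-1}=K_\infty$ as self-adjoint operators, using that $\mathcal{U}$ preserves the core consisting of finite Hermite--Fourier sums (because $\Psi_{\omega,\epsilon}^{\pm1}-\id$ is $\mathcal{O}(\epsilon^{2/3})$ in $\mathcal{B}(\mathcal{H}^p)$ for $p\in[0,3)$, which in particular controls the action on smooth, fast-decaying functions).

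\textbf{Step 3: Conclude.} The operator $K_\infty$ is diagonal in the orthonormal basis $\{h_i(x)e^{\mi\langle k,\theta\rangle}\}_{i\ge1,\,k\in\mathbb{Z}^n}$ of $\mathcal{K}$, with eigenvalues $\lambda_i^\infty+\langle k,\omega\rangle$; in particular its spectrum is pure point. Because pure point spectrum is preserved under unitary equivalence, $K_F$ has pure point spectrum as well, with eigenfunctions $\Psi_{\omega,\epsilon}(\theta)h_i(x)e^{\mi\langle k,\theta\rangle}$. The main (mild) obstacle is the bookkeeping in Step 2: one must verify that $\mathcal{U}$ maps the domain of $K_F$ onto the domain of $K_\infty$ and that the commutation extends from a common core to the full self-adjoint operators. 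This is where the quantitative bound on $\Psi_{\omega,\epsilon}^{\pm1}-\id$ furnished by Theorem \ref{MainTheorem} is used, since it guarantees that $\mathcal{U}$ is a bounded automorphism of the relevant Sobolev-type scale and hence behaves well with respect to domain issues.
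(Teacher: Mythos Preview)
Your proposal is correct and follows the standard Howland--Yajima argument that the paper itself has in mind: the paper does not write out a proof but simply states that Corollary~\ref{PurePoint} ``is similar as \cite{GT2011}'', where exactly this fiberwise conjugation to a diagonal Floquet operator is carried out. In that sense you have supplied more detail than the authors do; the lift $\mathcal{U}$, the intertwining identity $K_F\Psi_{\omega,\epsilon}=\Psi_{\omega,\epsilon}K_\infty$, and the identification of the eigenbasis $\{\Psi_{\omega,\epsilon}(\theta)h_i\,e^{\mi\langle k,\theta\rangle}\}$ with eigenvalues $\lambda_i^\infty+\langle k,\omega\rangle$ are precisely the ingredients used in \cite{GT2011}, so your route and the paper's intended route coincide.
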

\subsection{Discussions}
\indent The boundedness of Sobolev norms via reducibility, as well as the pure point nature of Floquet operator, was firstly considered for 1-D quantum harmonic oscillator (`QHO' for short) with time periodic smooth perturbations \cite{Com87, DLSV2002, EV83, Kuk1993}. As for time quasi-periodic perturbations, we can refer to \cite{GT2011, Wang08,WL2017}, in which the perturbation potentials are all supposed to be decaying to zero when $x$ goes to $\infty$.  \\
\indent Eliasson \cite{Eli2011} asked the following open question:
\textsl{``Is the 1-D quantum harmonic oscillators (\ref{Schrodinger}) reducible for the bounded perturbation i.e. $|V(x, \theta)|\leq C$, where $V(x, \theta )$ is smooth on $x$ and analytic on $\theta\in \mathbb T^n$?  ''
}This question follows  naturally from the results in \cite{BG01} and \cite{LY10} and keeps still unsolved till now except some special cases(see \cite{LiangLuo2021}).   
Theorem \ref{MainTheorem} gives an answer for Eliasson's question if we assume further   $|xV_x(x, \theta)|\leq C$.   \\
\indent For 1-D  QHO with  unbounded perturbations,  see Bambusi's recent papers \cite{Bam2017, Bam2018} based on the pseudo-differential calculus.  In fact Bambusi's  results in 1-D quantum harmonic oscillators \cite{Bam2018} can be translated into 
matrix language.  In \cite{Chodosh11} a discrete difference operator $\Delta$ on a function $K: \mathbb N_0\times \mathbb N_0\rightarrow \R$ is defined as 
$(\Delta K)_m^n : =\Delta K_{m}^n= K_{m+1}^{n+1}- K_m^n$,
where $\mathbb N_0 : =\{1,2,\cdots \}$. 
Write $\Delta^{M}$ to signify applying the difference operator $M$ 
times. Chodosh \cite{Chodosh11} called that a function $K: \mathbb N_0\times \mathbb N_0\rightarrow \R$ is a symbol matrix of order $r$ if for all $M$, $N\in \mathbb N_0$, there is $C_{M, N}>0$ such that 
$$|(\Delta^{M} K)_m^n|\leq C_{M,N} (1+m+n)^{r-M}(1+|m-n|)^{-N}. $$ 
Chodosh \cite{Chodosh11} proved that a operator $A$ with the symbol $a\in S^r$(\cite{Bam2018}, $l=1$)  if and only if the ``matrix of A'' 
$K^{(A)}: \mathbb N_0\times \mathbb N_0\rightarrow \R$
defined by $(m,n)\rightarrow \langle A h_n, h_m\rangle_{L^2(\R)}$ is an order of $r/2$ symbol matrix.   Bambusi's results \cite{Bam2018} together with \cite{Chodosh11} implied that  the discrete difference  matrix should play a very important role in the reducibility.  In the following we will give a heuristic proof how 
the discrete difference operator ``$\Delta$'' and the corresponding decay in the difference matrix element  work  in  the proof of Theorem \ref{MainTheorem}. \\
\indent For the following discussion we will introduce some notations.   Let $\alpha\in\mathbb R$, denote by $\mathcal M_\alpha$ the set of infinite matrices $A:\mathbb{N}_0\times\mathbb{N}_0\mapsto\mathbb C$ that satisfy $$|A|_\alpha:=\sup_{i,j\in\mathbb N_0}(1+|i-j|)^\alpha|A_i^j|<\infty.$$  For $\alpha, \beta\in\mathbb R$ we define
$\mathcal M_{\alpha,\beta}$,  the subset of $\mathcal M_\alpha$ as the following: an infinite matrix $A$ belongs to $\mathcal M_{\alpha,\beta}$ if 
$$|A|_{\alpha,\beta}:=\sup_{i,j\in\mathbb N_0}(1+|i-j|)^\alpha|A_i^j|+\sup_{i,j\in\mathbb N_0}(ij)^\beta|\Delta A_i^j|<\infty. $$ \par
 \indent Following  Bambusi and Graffi \cite{BG01}, we will prove  the reducibility of the equation
\[
{\rm i}\partial_t u(t)=\big(A_0+ P_0(\omega t)\big)u(t),  \qquad u\in \ell_0^2,
\]
where $\omega\in \mathcal{D}_0 : = [0,2\pi)^n$ and 
 $A_0=\diag(\nu_j)_{j\ge1}$ and   
$
\nu_j=j
$ for simplicity,
and the map $\mathbb T^n\ni\theta~\mapsto~P_0(\theta)\in \mathcal{M}_{\alpha, \beta}$ is analytic on $\mathbb T_\sigma^n$ with $0<\beta \leq \alpha$ and $(P_0)_i^j(\theta)=(P_0)_j^i(\theta)$ for $\theta\in\mathbb T^n$.  
Let us describe the iteration step. We begin from the equation 
$$
{\rm i}\partial_t u(t)=\big(A+ P(\omega t)\big)u(t),  \qquad u\in \ell_0^2, \qquad P(\theta)\in \mathcal{M}_{\alpha, \beta}. 
$$
Now as \cite{BG01} if we can find a coordinate transformation $u=\me^Bv$ to conjugate the above system
 into 
$
\mi\dot v=(A^++P^+)v,
$
where $A=\diag(\lambda_j(\omega))_{j\geq 1}$ and $A^+$ is still diagonal and $P^+$ still in $\mathcal{M}_{\alpha,\beta}$, which is higher order than $P$,  one finishes one step of iteration. As \cite{BG01}, one should seek the solutions of 
the homological equation 
$$
[A,B]-\mi\dot B=\widetilde{A}-P+R. 
$$ 
From the standard Fourier analysis the center of the reducibility problem is to estimate the measure of the following set 
\begin{eqnarray}\label{measuresetsetion1}
\bigcup\limits_{|k|\leq K, i>j\geq 1}\{\omega\in \mathcal{D}_0: \big| \langle k, \omega \rangle+\lambda_i(\omega)-\lambda_j(\omega)\big|<\kappa(1+|i-j|)\}. 
\end{eqnarray}
We recall some existent methods to estimate the above set.  In \cite{GT2011}, since the perturbation potential $|V(x, \theta)|\leq \frac{c}{\langle x\rangle^{\delta_1}}$ with $\delta_1>0$, it follows that $\lambda_i=\nu_i+\epsilon_0 i^{-\delta_2}$ with some $\delta_2>0$.  
The measure estimate for (\ref{measuresetsetion1}) is standard in this case, which is close to 1-D wave equations(see \cite{ChYou00}).   In \cite{WL2017}, the estimate is similar since $\lambda_i=\nu_i+\epsilon_0 (\ln i)^{-\delta_3}$ with some positive $\delta_3$ depending on $n$ if we choose small divisor conditions suitably.   \\
\indent Another way to estimate the above set (\ref{measuresetsetion1}) is called quasi-T\"oplitz method. For the 1-D derivative wave equation Berti, Biasco and Procesi \cite{BBP13}(\cite{PrXu13}) proved that 
$$\lambda_j= j+a_{+}(\xi)+\frac{m}{2j}+\mathcal O(\gamma^{2/3}/j)\  {\rm for}\ j\geq \mathcal O(\gamma^{-1/3}).  $$   
When $i>j>\mathcal O(|k|^{\tau} \gamma^{-1/3})$, they could obtain
\[
|\langle  k, \omega(\xi)\rangle+\lambda_i-\lambda_j|=|\langle  k, \omega(\xi)\rangle +i-j+ \frac{m(i-j)}{2ij}+\mathcal O(\gamma^{2/3}/j)|
\]
and the following estimate is close to 1-D  wave equation since they could impose first order Melnikov conditions $|\langle  k, \omega(\xi)\rangle+h|\geq 2\gamma^{2/3}/|k|^{\tau}$ for any $h\in \Z$. \\
\indent In the concerned problem  we will use a different method to estimate the set (\ref{measuresetsetion1}).  The trick is to use the discrete difference matrix and its element decay from the norm estimates.  In fact from the iteration step(see (\ref{EstimateAProposition})) one can prove  
\[
|\Delta(A-A_0)_j^j| = |(A-A_0)_{j+1}^{j+1}-(A-A_0)_{j}^{j}|\leq Cj^{-2\beta},
\]
which is proved  to be important for measure estimate as the following.  
Note 
\begin{align*}
&|k\cdot\omega+\lambda_i-\lambda_j|\ge|k\cdot\omega+\nu_i-\nu_j|-
|(A-A_0)_i^i-(A-A_0)_j^j|\\
\ge&|k\cdot\omega+i-j|-\sum_{l=j}^{i-1}|\Delta(A-A_0)_l^l|
\ge |k\cdot\omega+i-j|-\frac{C|i-j|}{j^{2\beta}}.
\end{align*}
The following estimate is similar as above. \\
\indent  In the end let us review the previous works on the reducibility and the behaviors of solutions in Sobolev spaces. Reducibility for PDEs in high dimension was initiated by Eliasson-Kuksin \cite{EK2009}. We can refer to \cite{GP19}, \cite{LiangW2019} and \cite{LW2021} for higher-dimensional QHO with bounded potential. The  reducibility result for n-D QHO with polynomial perturbations was first set up  in \cite{BGMR2018}.  Montalto \cite{Mon19} obtained the first reducibility result for linear wave equations with unbounded perturbations on $\mathbb T^d$, which can be applied to the linearized Kirchhoff equation in higher dimension.  For transportation equations with unbounded perturbations, see Bambusi, Langella and Montalto \cite{BLM18} (\cite{FGiMP19}). Recently, there are also some research papers \cite{FGr19, FGN19} with  a linear Schr\"odinger equation on Zoll manifold with unbounded potential.  By implementing the above techniques  the KAM-type results of  quasi - linear PDEs such as  incompressible Euler flows in 3-D   and forced Kirchhoff equation on $\mathbb T^d$  have been established in \cite{BaMon21} and \cite{CoMon18} respectively.  \\
\indent The reducibility results usually imply the boundedness of Sobolev norms. We refer to the papers \cite{BGMR2018, Del2014,FaRa2020, GrYa2000,LZZ2020, Mas2018,Th2020} for the growth rate of the solutions of QHO including the upper-lower bound.  There are also many literatures, e.g. \cite{BGMR2019,BLM2021, BM2019, MR2017}, which are closely relative to the upper growth bound of the solution in Sobolev space. \\
 \indent The rest paper will be organized as follows. In Section 2,  a new reducibility theorem is presented. In Section 3, we check all the hypotheses of the  reducibility theorem  are satisfied which follows  the main theorem. In the following section we prove the  reducibility theorem.  Finally, the appendix contains some technical Lemmas. 
 
 \section{Reducibility theorem}
In this section we will state an abstract reducibility theorem for a quasi-periodic in time system of the form $\mi\dot u=\big(A+\epsilon P(\omega t)\big)u$ with $ A=\diag(\nu_i)_{i\ge1}$.
\subsection{Setting}
We first introduce some spaces and work out various properties.\\
\textbf{Infinite matrices}. In the introduction we have introduced the Banach space $\mathcal{M}_{\alpha}$ and its subspace $\mathcal{M}_{\alpha,\beta}$.  As \cite{GT2011} we denote by $\mathcal M_{\alpha+}$ the set of infinite matrices $A:\mathbb N_0\times\mathbb N_0\mapsto\mathbb C$ that satisfy $|A|_{\alpha+}:=\sup_{i,j\in\mathbb N_0}(1+|i-j|)^{\alpha+1}|A_i^j|<\infty$. Similarly,  we denote by $\mathcal M_{\alpha+,\beta}$ the subspace of $\mathcal M_{\alpha+}$: an infinite matrix $A$ belongs to $\mathcal M_{\alpha+,\beta}$ if  
$$|A|_{\alpha+,\beta}: =\sup_{i,j\in\mathbb N_0}(1+|i-j|)^{\alpha+1}|A_i^j|+\sup_{i,j\in\mathbb N_0}(1+|i-j|)(ij)^\beta|\Delta A_i^j|<\infty. $$ We remark that $\mathcal M_{\alpha+}\subset\mathcal M_\alpha$ and $\mathcal M_{\alpha+,\beta}\subset\mathcal M_{\alpha,\beta}$ for any $\alpha,\beta\in\mathbb R$.\par
The following structural lemma is proved in Appendix \ref{ProofAppendix}. 
\begin{Lemma}\label{Algebra}
Let $\alpha\ge\beta>0$. There exists a constant $C\equiv C(\alpha,\beta)>0$ such that\\
(i). Let $A,B\in\mathcal M_{\alpha+,\beta}$. Then $AB$ belongs to $\mathcal M_{\alpha+,\beta}$ and $|AB|_{\alpha+,\beta}\le C|A|_{\alpha+,\beta}|B|_{\alpha+,\beta}$.\\
(ii). Let $A\in\mathcal M_{\alpha,\beta}$ and $B\in\mathcal M_{\alpha+,\beta}$. Then $AB$ and $BA$ belong to $\mathcal M_{\alpha,\beta}$ and
\[
|AB|_{\alpha,\beta},|BA|_{\alpha,\beta}\le C|A|_{\alpha,\beta}|B|_{\alpha+,\beta}.
\]
(iii). Let $A\in\mathcal M_{\alpha+}$. Then we have $\|A\|_{\mathcal{B}(\ell_s^2)}\le C|A|_{\alpha+}$ for any $s\in(-2\alpha-1,2\alpha+1)$.\\
(iv). Let $A\in\mathcal M_\alpha$. Then we have
\begin{alignat*}{5}
&\|A\|_{\mathcal{B}(\ell_1^2,\ell_{-1}^2)}\le C|A|_\alpha,&&\text{if }\alpha\in(0,\tfrac12],\\
&\|A\|_{\mathcal{B}(\ell_0^2,\ell_s^2)}\le C|A|_\alpha,~\forall~s<2\alpha-2,&\quad&\text{if }\alpha\in(\tfrac12,1],\\
&\|A\|_{\mathcal{B}(\ell_0^2)}\le C|A|_\alpha,&&\text{if }\alpha\in(1,\infty).
\end{alignat*}
\end{Lemma}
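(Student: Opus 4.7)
All four parts reduce to convolution-type estimates for power-law decaying sequences. I would prove them in order (i)$\to$(ii)$\to$(iii)$\to$(iv), with parts (i) and (ii) relying on a discrete Leibniz-type identity for $\Delta(AB)_i^j$, and parts (iii) and (iv) on Schur's test applied to a weighted conjugate of $A$.

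For (i), the off-diagonal bound $(1+|i-j|)^{\alpha+1}|(AB)_i^j| \le C|A|_{\alpha+,\beta}|B|_{\alpha+,\beta}$ follows in the standard way from $(1+|i-j|)^{\alpha+1} \le 2^\alpha\bigl((1+|i-k|)^{\alpha+1} + (1+|k-j|)^{\alpha+1}\bigr)$ inserted into $(AB)_i^j = \sum_k A_i^k B_k^j$, together with $\sum_k (1+|k|)^{-(\alpha+1)} < \infty$ (which holds since $\alpha>0$). For the difference bound I would shift the summation index, writing
\[
(AB)_{i+1}^{j+1} = A_{i+1}^1 B_1^{j+1} + \sum_{m\ge1} A_{i+1}^{m+1}B_{m+1}^{j+1},
\]
then substituting $A_{i+1}^{m+1}=A_i^m+\Delta A_i^m$ and $B_{m+1}^{j+1}=B_m^j+\Delta B_m^j$ to obtain
\[
\Delta(AB)_i^j = A_{i+1}^1 B_1^{j+1} + \sum_{m\ge1}\bigl(A_i^m \Delta B_m^j + (\Delta A_i^m) B_m^j + \Delta A_i^m \Delta B_m^j\bigr).
\]
The boundary term is easily controlled by the $\alpha+$ decay of $A$ and $B$; the three convolution sums are estimated by distributing the weight $(1+|i-j|)(ij)^\beta$ via $1+|i-j|\le(1+|i-m|)(1+|m-j|)$ and the trivial identity $(ij)^\beta=(im)^\beta(mj)^\beta m^{-2\beta}$, followed by standard power-law convolution bounds. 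Part (ii) is analogous, the only change being the weaker $\alpha$-decay (as opposed to $\alpha+$) on the off-diagonal part of $A$.

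For (iii), I would conjugate by $D_s=\diag(i^{s/2})_{i\ge1}$, so that $\|A\|_{\mathcal{B}(\ell_s^2)} = \|\widetilde A\|_{\mathcal{B}(\ell_0^2)}$ with $\widetilde A_i^j=(i/j)^{s/2}A_i^j$, and apply Schur's test: uniform boundedness of $\sup_i\sum_j(i/j)^{s/2}(1+|i-j|)^{-(\alpha+1)}$ and its transposed counterpart is checked by splitting into $\{j\le i/2\}$, $\{i/2<j\le 2i\}$, $\{j>2i\}$; the range $|s|<2\alpha+1$ emerges as the threshold for summability. Part (iv) follows the same scheme but with weaker target spaces (respectively $\ell_{-1}^2$, $\ell_s^2$ with $s<2\alpha-2$, and $\ell_0^2$); when $\alpha\le1$ the series $(1+|i-j|)^{-\alpha}$ is no longer absolutely summable, so Cauchy--Schwarz is invoked at the cost of enlarging the target space.

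The main obstacle is the double-difference term $\sum_m \Delta A_i^m \Delta B_m^j$ in part (i): the naive bound $|\Delta A_i^m||\Delta B_m^j| \le C(im)^{-\beta}(mj)^{-\beta}(1+|i-m|)^{-1}(1+|m-j|)^{-1}$ produces a factor $m^{-2\beta}$ that fails to be summable when $\beta\le 1/2$. I would resolve this by using, for one of the two factors, the $\alpha+$ bound on the entries of $A$ themselves, namely $|\Delta A_i^m|\le |A_{i+1}^{m+1}|+|A_i^m|\le 2|A|_{\alpha+,\beta}(1+|i-m|)^{-(\alpha+1)}$, which trades the weight $m^{-\beta}(1+|i-m|)^{-1}$ for the stronger spatial decay $(1+|i-m|)^{-(\alpha+1)}$ and restores summability. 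Once this refinement is in place, the remainder of the proof reduces to routine bookkeeping.
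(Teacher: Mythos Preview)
Your proposal is correct and follows the same overall strategy as the paper: a discrete Leibniz identity for parts (i)--(ii) and a Schur/Cauchy--Schwarz argument (the paper phrases it as H\"older with a conjugate pair) for parts (iii)--(iv).

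The one instructive difference is in the Leibniz step. You expand both factors simultaneously, producing three convolution sums including the double-difference term $\sum_m \Delta A_i^m\,\Delta B_m^j$, which you then flag as the main obstacle and resolve (correctly) by reverting to the entry bound $|\Delta A_i^m|\le 2|A|_{\alpha+}(1+|i-m|)^{-(\alpha+1)}$ on one factor. The paper instead expands only one factor at a time,
\[
\Delta(AB)_i^j \;=\; A_{i+1}^1 B_1^{j+1} \;+\; \sum_{l\ge1} (\Delta A_i^l)\,B_{l+1}^{j+1} \;+\; \sum_{l\ge1} A_i^l\,(\Delta B_l^j),
\]
so no double-difference term ever appears. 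Your fix is in fact equivalent to regrouping $(\Delta A_i^m)B_m^j + (\Delta A_i^m)(\Delta B_m^j)$ back into $(\Delta A_i^m)B_{m+1}^{j+1}$, after which the two arguments coincide; the ``obstacle'' you identify is thus an artifact of over-expanding rather than a genuine difficulty. Parts (iii) and (iv) match the paper's treatment directly.
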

\noindent
\textbf{Parameter}. In this paper $\omega$ will play the role of a parameter belonging to $\mathcal{D}_0:=[0,2\pi)^n$. All the constructed maps will depend on $\omega$ with $\mathcal C^1$ regularity. When a map is only defined on a Cantor subset of $\mathcal D_0$ the regularity has to be understood in the Whitney sense.\par
Let $\sigma>0$ and $\mathcal D\subset\mathcal{D}_0$. We denote by $\mathcal M_{\alpha,\beta}(\mathcal{D},\sigma)$ the set of $\mathcal C^1$ mappings $\mathcal D\times\mathbb T_\sigma^n\ni(\omega,\theta)\mapsto P(\omega,\theta)\in\mathcal{M}_{\alpha,\beta}$ which is real analytic in $\theta\in\mathbb{T}_\sigma^n$ and is equipped with the norm $\|P\|_{\alpha,\beta}^{\mathcal{D},\sigma}:=\sup\limits_{\substack{|\Im\theta|<\sigma\\ \omega\in\mathcal{D},l=0,1}}|\partial_\omega^lP(\omega,\theta)|_{\alpha,\beta}$. The subset of $\mathcal M_{\alpha,\beta}(\mathcal{D},\sigma)$ formed by the mappings $B$ such that $B(\omega,\theta)\in\mathcal M_{\alpha+,\beta}$ is denoted by $\mathcal M_{\alpha+,\beta}(\mathcal D,\sigma)$ and equipped with the norm 
$\|B\|_{\alpha+,\beta}^{\mathcal{D},\sigma}:=\sup\limits_{\substack{|\Im\theta|<\sigma\\ \omega\in\mathcal{D},l=0,1}}|\partial_\omega^lB(\omega,\theta)|_{\alpha+,\beta}$. The space of mappings $A\in\mathcal M_{\alpha,\beta}(\mathcal D,\sigma)$ that are indenpendent of $\theta$ will be denoted by $\mathcal M_{\alpha,\beta}(\mathcal D)$ and equipped with the norm 
$\|A\|_{\alpha,\beta}^\mathcal D:=\sup\limits_{\omega\in\mathcal D,l=0,1}
|\partial_\omega^lA(\omega)|_{\alpha,\beta}$.
\subsection{Hypothesis on the spectrum}
Now we formulate our hypothesis on $\nu_i,~i\in\mathbb N_0$.
\begin{Hypothesis1}[asymptotics]
Assume that there exist some absolute positive constants $c_0,c_1$ and $\delta$ such that for all $i,j\in\mathbb N_0$
\[
|\nu_i-\nu_j|\ge c_0|i-j|\text{ and }|\nu_{i+1}-\nu_i+\nu_j-\nu_{j+1}|\le\frac{c_1|i-j|}{(ij)^\delta}.
\]
\end{Hypothesis1}
\begin{Hypothesis2}[second Melnikov condition in measure]
There exist some absolute positive constants $\tau_1,\tau_2$ and $c_2$ such that the following holds: for  each $\gamma>0$ and $K\ge1$ there exists a closed subset $\mathcal{D}=\mathcal{D}(\gamma,K)\subset\mathcal{D}_0$ satisfying $\meas(\mathcal{D}_0\setminus\mathcal{D})\le c_2\gamma^{\tau_1}K^{\tau_2}$ such that for all $\omega\in\mathcal D$, all $k\in\mathbb Z^n$ with $0<|k|\le K$ and all $i,j\in\mathbb N_0$ we have 
\[
|k\cdot\omega+\nu_i-\nu_j|\ge\gamma(1+|i-j|). 
\]
\end{Hypothesis2}
\subsection{The reducibility theorem} Consider the non-autonomous system in $\mathcal{H}$:
\begin{equation}\label{OriginalSystem}
\mi\dot u=\big( A+\epsilon P(\omega_1t,\omega_2t,\cdots,\omega_nt)\big)u,
\end{equation}
where 
$A=\diag(\nu_i)_{i\ge1}$ and $P$ is defined for $i,j\in\mathbb N_0$ by 
\begin{equation}\label{PDefinition}
P_j^i(\omega t)=\int_\mathbb R V(x,\omega t)h_i(x)h_j(x)\,\md x,
\end{equation} 
and we define $$\mathcal{H}=
\begin{cases}
	\ell_1^2,&\text{for } 0<\alpha\leq 1/2,\\
	\ell_0^2,&\text{for }\alpha>1/2.
\end{cases}$$
\begin{Theorem}\label{ReducibilityTheorem}
Assume that $(\nu_i)_{i\ge1}$ satisfies Hypotheses H1, H2 and that the perturbation matrix $P$ is hermitian and  $P\in\mathcal M_{\alpha,\beta}(\mathcal D_0,\sigma)$ for some $0<\beta\le\min\{\alpha,\delta\}$ and $\sigma>0$.  There exists $\epsilon_*>0$ such that for all $0\le\epsilon<\epsilon_*$ there is a subset $\mathcal{D}_\epsilon\subset\mathcal{D}_0=[0,2\pi)^n$ of asymptotically full measure such that for all $\omega\in\mathcal{D}_\epsilon$, the system \eqref{OriginalSystem} reduces in $\mathcal{H}$ to an autonomous system 
\begin{equation}\label{ReducedSystem}
	\mi\dot v= A^\infty(\omega)v,\quad A^\infty(\omega)=\diag(\lambda_i^\infty(\omega))_{i\ge1} 
\end{equation}
by a coordinate transformation $u=\Phi_{\omega, \epsilon}(\omega t)v$,  where $\Phi_{\omega, \epsilon}(\theta)$ is unitary on $\ell_0^2$ and  real analytic in  $\theta\in\mathbb T_{\sigma/2}^n$ and  $\lambda_i^\infty(\omega)\in\mathbb R$ is $\mathcal C^1$ smooth in $\omega$ and close to $\nu_i$ for all $i\in\mathbb N_0$.\par
More precisely, let $p\in[0,2\alpha+1)$ then there is a $C>0$ such that
\begin{equation}\label{ReducibilityTheoremEstimate}
\begin{gathered}
\meas(\mathcal{D}_0\setminus\mathcal{D}_\epsilon)\le C\epsilon^{\frac{\iota_1}{17}}\text{ with }\iota_1=\tfrac\beta{\beta+1}\max\{\tau_1,1\},\\
|\lambda_i^\infty-\nu_i|\le C\epsilon,\quad\forall~i\ge1,\\
\|\Phi_{\omega, \epsilon}^{\pm1}(\theta)-\id\|_{\mathcal{B}(\ell_p^2)}\le C\epsilon^\frac23,\quad\forall~\theta\in\mathbb T^n_{\sigma/2}.
\end{gathered}
\end{equation}\par
\end{Theorem}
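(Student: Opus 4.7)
The plan is to run a KAM reducibility scheme in the algebra $\mathcal{M}_{\alpha,\beta}$, with Lemma 2.1 closing the norm estimates at each step and the discrete-difference trick sketched in the introduction driving the measure estimate. At stage $m$ the system reads $\mi\dot u = (A_m + P_m(\omega t))u$ with $A_m = A + \sum_{s<m} \widetilde{A}_s$ diagonal and $P_m$ hermitian and small in $\mathcal{M}_{\alpha,\beta}(\mathcal{D}_m, \sigma_m)$. I would set the standard iteration sequences $\epsilon_{m+1} = \epsilon_m^{\kappa}$ with some $\kappa \in (1,2)$, a Fourier cut-off $K_m \nearrow \infty$, analyticity strips $\sigma_m \searrow \sigma/2$, small-divisor thresholds $\gamma_m \searrow 0$ and shrinking parameter sets $\mathcal{D}_m \supset \mathcal{D}_{m+1}$, then look for $u = e^{B_m(\omega t)} v$ with $B_m \in \mathcal{M}_{\alpha+,\beta}$ so that the transformed generator is $A_{m+1} + P_{m+1}$ with $A_{m+1}$ diagonal and $P_{m+1}$ quadratically smaller.

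Choosing $\widetilde{A}_m$ to be the diagonal part of the zero Fourier mode of $P_m$, the homological equation
\[
[A_m, B_m] - \mi(\omega\cdot\partial_\theta) B_m = \widetilde{A}_m - T_{K_m} P_m
\]
is solved componentwise by imposing $(k\cdot\omega + \lambda_i^{(m)} - \lambda_j^{(m)}) \widehat{B}_m(k)_i^j = -\widehat{P}_m(k)_i^j$ off the resonant set; then $A_{m+1} = A_m + \widetilde{A}_m$ and $P_{m+1}$ is obtained from the usual conjugation formula $e^{-B_m}(P_m + \text{commutator terms})e^{B_m} + (\id - T_{K_m}) P_m$. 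The algebra bounds of Lemma 2.1, combined with standard analytic Cauchy estimates in $\theta$ and differentiation in $\omega$, give $\|B_m\|_{\alpha+,\beta}^{\mathcal{D}_{m+1},\sigma_{m+1}} \le C \gamma_m^{-c_1} K_m^{c_2} (\sigma_m - \sigma_{m+1})^{-c_3} \|P_m\|_{\alpha,\beta}^{\mathcal{D}_m,\sigma_m}$, a quadratic improvement for $P_{m+1}$ and an exponentially small Fourier tail, which close for a balanced choice of parameters.

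The main obstacle, and the novel part of the argument, is the measure of the excluded set
\[
\mathcal{R}_m = \bigcup_{0 < |k| \le K_m,\; i, j \ge 1} \{\omega \in \mathcal{D}_m : |k\cdot\omega + \lambda_i^{(m)} - \lambda_j^{(m)}| < \gamma_m(1+|i-j|)\}.
\]
Hypothesis H2 only controls the unperturbed divisor $k\cdot\omega + \nu_i - \nu_j$, and $\lambda_i^{(m)} - \nu_i$ need not tend to zero as $i\to\infty$ under only an $\mathcal{M}_{\alpha,\beta}$ bound, so a direct perturbation of eigenvalues is unavailable. The key inductive invariant, furnished automatically by the $\beta$-part of the $\mathcal{M}_{\alpha,\beta}$ norm on each $\widetilde{A}_s$, is
\[
|\Delta(A_m - A_0)_\ell^\ell| = |(A_m - A_0)_{\ell+1}^{\ell+1} - (A_m - A_0)_\ell^\ell| \le C \ell^{-2\beta},
\]
and summable over $s<m$ once $\|P_s\|_{\alpha,\beta}$ is super-exponentially small. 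Telescoping then yields
\[
|k\cdot\omega + \lambda_i^{(m)} - \lambda_j^{(m)}| \ge |k\cdot\omega + \nu_i - \nu_j| - \sum_{\ell=j}^{i-1} |\Delta(A_m - A_0)_\ell^\ell| \ge \gamma_m(1+|i-j|) - \frac{C|i-j|}{j^{2\beta}},
\]
so the resonance is impossible once $j \gtrsim \gamma_m^{-1/(2\beta)}$. Combined with Hypothesis H1, which in the presence of a resonance forces $|i-j| \le C K_m$, the union $\mathcal{R}_m$ reduces to a finite collection of triples $(k,i,j)$; applying H2 to each and summing gives a polynomial bound $\meas(\mathcal{R}_m) \le C \gamma_m^{\tau_1'} K_m^{\tau_2'}$ with explicit exponents.

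Passing to the limit is then routine. The partial products $\Phi^{(m)} = e^{B_0} \cdots e^{B_{m-1}}$ form a Cauchy sequence in $\mathcal{B}(\ell_p^2)$ for $p \in [0, 2\alpha+1)$ by Lemma 2.1(iii), with limit $\Phi_{\omega,\epsilon}(\theta)$ unitary on $\ell_0^2$, and the diagonals $\lambda_i^{(m)}$ converge Whitney-$\mathcal{C}^1$ in $\omega$ to $\lambda_i^\infty$ with $|\lambda_i^\infty - \nu_i| \le C\epsilon$. Tuning $\gamma_0 \sim \epsilon^{1/3}$ makes $B_0$ dominate the telescoping sum and yields $\|\Phi_{\omega,\epsilon}^{\pm 1} - \id\|_{\mathcal{B}(\ell_p^2)} \le C \epsilon^{2/3}$, while summing the measure losses $\meas(\mathcal{R}_m)$ geometrically and optimising the exponents against the quadratic KAM convergence produces the final $\meas(\mathcal{D}_0 \setminus \mathcal{D}_\epsilon) \le C \epsilon^{\iota_1/17}$ with $\iota_1 = \tfrac{\beta}{\beta+1}\max\{\tau_1, 1\}$.
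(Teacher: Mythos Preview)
Your outline tracks the paper's proof closely: the same KAM scheme, the homological equation solved in $\mathcal M_{\alpha+,\beta}$, the discrete-difference telescoping to dispose of large indices in the resonance set, and the passage to the limit via Lemma~2.1(iii). One point needs sharpening, and without it the stated exponent $\iota_1$ will not emerge.

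You write that after reducing $\mathcal R_m$ to finitely many triples one ``applies H2 to each and sums''. This is not how the argument runs. Hypothesis H2 is invoked \emph{once, globally}, to excise a set of measure $\le C\gamma^{\tau_1}K^{\tau_2}$ on which the \emph{unperturbed} divisor satisfies $|k\cdot\omega+\nu_i-\nu_j|\ge 2\gamma(1+|i-j|)$; your telescoping bound then upgrades this to the perturbed divisor, but only for $j\gtrsim\gamma^{-1/(2\beta)}$. For the finitely many surviving triples with small $j$, H2 says nothing about $\lambda_i^{(m)}-\lambda_j^{(m)}$. Instead the paper uses the transversality $|\partial_\omega(k\cdot\omega+\lambda_i-\lambda_j)\cdot\tfrac{k}{|k|}|\ge\tfrac{|k|}{2}$ (available because $\|A_m-A_0\|_{\alpha,\beta}^{\mathcal D_m}\ll1$) and excises, \emph{per triple}, a set of measure $\le C\kappa(1+|i-j|)/|k|$ at a \emph{second, smaller} threshold $\kappa$. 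Summing over the $O(K^{n+1}\gamma^{-1/\beta})$ triples gives a contribution $C\kappa\gamma^{-1/\beta}K^{n+1}$, and balancing it against the H2 loss by choosing $\kappa=\gamma^{1+1/\beta}$ is precisely what produces the factor $\tfrac{\beta}{\beta+1}$ in $\iota_1=\tfrac{\beta}{\beta+1}\max\{\tau_1,1\}$. Your single-scale description, with one threshold $\gamma_m$ and ``H2 applied to each'', misses this two-scale mechanism.
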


\section{Application to quantum harmonic oscillator on $\mathbb R$}
In this section we will prove Theorem \ref{MainTheorem} as a direct corollary of the reducibility theorem.  Before the checks on all the hypotheses, we expand  the Schr\"odinger equation \eqref{Schrodinger} on the  Hermite basis $(h_i)_{i\ge1}$ and then it 
is equivalent to the system \eqref{OriginalSystem} with $\nu_i=2i-1$ for $i\ge1$. 
\subsection{Verification of the spectrum}\label{Verification of the spectrum}
For 1-D quantum harmonic oscillator we have $\nu_i=2i-1$ for $i\ge1$. Then H1 is verified with $c_0=c_1=\delta=1$($\delta$ can be any positive number). Besides, we invoke Lemma \ref{AuxiliaryMeasure} to confirm H2 for $\tau_1=1$ and $\tau_2=n+1$. Then we only remain to verify that $P$ defined by \eqref{PDefinition} belongs to $\mathcal{M}_{\alpha,\beta}(\mathcal{D}_0,\sigma)$ for some $\alpha,\beta$ with $0<\beta\le\min\{\alpha,\delta\}$, which follows from the subsequent lemma.\\
\indent Before the proof we introduce  two important operators and some facts.  
From Lemma 3.1 in \cite{Chodosh11},  one obtains  for any $i\geq 1$, 
\begin{align}
\partial_xh_i(x)&=-\sqrt{\tfrac i2}h_{i+1}(x)+\sqrt{\tfrac{i-1}{2}}h_{i-1}(x) \label{equality1},\\
xh_i(x)&=\sqrt{\tfrac i2}h_{i+1}(x)+\sqrt{\tfrac{i-1}{2}}h_{i-1}(x) \label{equality2}.
\end{align} 
Denote $T=\partial_x+x$ and $T^\dagger=-\partial_x+x$.  From (\ref{equality1}),  (\ref{equality2}) and a straightforward computation, we have 
$T^*=T^\dagger,TT^\dagger=H+\id$ and 
\[
Th_i(x)=\sqrt{2(i-1)}h_{i-1}(x),\quad T^\dagger h_i(x)=\sqrt{2i}h_{i+1}(x),\quad TT^\dagger h_i(x)=2ih_i(x).
\]

\begin{Lemma}\label{PVerification}
Assume that a $\mathcal C^1$ smooth function $V(x,\theta)$  analytically in $\theta$ extends to $\mathbb T_\sigma^n$ for $\sigma>0$ and fulfills \eqref{PotentialAssumption},  then the  perturbation matrix $ P(\theta)$ defined by \eqref{PDefinition} is real analytic from $\mathbb{T}_\sigma^n$ to $\mathcal{M}_{1,\frac12}$.
\end{Lemma}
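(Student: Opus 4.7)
The plan is to establish two uniform entrywise bounds on $P(\theta)$, namely
\begin{equation*}
\sup_{i,j\ge 1}(1+|i-j|)\,|P_j^i(\theta)|\le C\qquad\text{and}\qquad \sup_{i,j\ge 1}\sqrt{ij}\,|\Delta P_j^i(\theta)|\le C,
\end{equation*}
for some $C$ depending only on the constants in \eqref{PotentialAssumption}, uniformly in $\theta\in\mathbb T_\sigma^n$. Once these hold, the real analyticity of $\theta\mapsto P(\theta)$ as a map into $\mathcal M_{1,1/2}$ follows from the $\theta$-analyticity of each coefficient $P_j^i(\theta)$ together with uniform convergence in the $\mathcal M_{1,1/2}$-norm.

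The main tool is the commutator identity $[T,V]=V'$ for the annihilation operator $T=\partial_x+x$. Pairing $TVh_j-VTh_j=V'h_j$ against $h_i$, using $\langle Tf,g\rangle=\langle f,T^{\dagger}g\rangle$ with $T^{\dagger}h_i=\sqrt{2i}\,h_{i+1}$ and $Th_j=\sqrt{2(j-1)}\,h_{j-1}$, and then shifting $j\mapsto j+1$, one obtains
\begin{equation*}
\sqrt{2i}\,P_{j+1}^{i+1}-\sqrt{2j}\,P_j^i=\int_{\mathbb R} V'(x)\,h_i(x)\,h_{j+1}(x)\,\md x.
\end{equation*}
Swapping $i\leftrightarrow j$ (using $P_j^i=P_i^j$ since $V$ is real) and adding yields
\begin{equation*}
(\sqrt{2i}+\sqrt{2j})\,\Delta P_j^i=\int_{\mathbb R} V'(x)\,\bigl(h_i h_{j+1}+h_{i+1}h_j\bigr)(x)\,\md x,
\end{equation*}
which reduces the second bound to estimating this integral. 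For the first bound I use $[H,V]=-V''-2V'\partial_x$: two integrations by parts yield $(i-j)P_j^i=\tfrac12\int V'\,W(h_i,h_j)\,\md x$ with Wronskian $W(h_i,h_j)=h_i'h_j-h_ih_j'=\sqrt{2j}\,h_ih_{j+1}-\sqrt{2i}\,h_{i+1}h_j$ by \eqref{equality1}--\eqref{equality2}; the trivial Cauchy--Schwarz bound $|P_j^i|\le\|V\|_{L^\infty}$ handles $|i-j|\le 1$.

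The main obstacle is estimating $\int V'\,\Phi\,\md x$ for Hermite-product combinations $\Phi$ using only $\|V\|_{L^\infty}$ and $\|xV'\|_{L^\infty}$. Since $|V'(x)|\le C/|x|$ is valid only away from the origin, one cannot directly write $\int V'\Phi=\int(xV')\,\Phi/x\,\md x$ unless $\Phi(0)=0$, which fails in the mixed-parity case when $i+j$ is odd. I would resolve this by subtracting a fixed Schwartz cutoff $\chi\in\mathcal S(\mathbb R)$ with $\chi(0)=1$, so that $\Phi-\chi\Phi(0)$ vanishes at $x=0$; then
\begin{equation*}
\int V'\,\Phi\,\md x=\Phi(0)\int V'\,\chi\,\md x+\int(xV')\,\frac{\Phi-\chi\Phi(0)}{x}\,\md x,
\end{equation*}
where the first integral equals $-\Phi(0)\int V\chi'\,\md x$ (integration by parts, bounded by $\|V\|_{L^\infty}$) and the second is bounded by $\|xV'\|_{L^\infty}\,\|(\Phi-\chi\Phi(0))/x\|_{L^1}$. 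The value $|\Phi(0)|\le C(ij)^{-1/4}$ is controlled via the pointwise bound $|h_k(0)|\le Ck^{-1/4}$ for $k$ odd (with $h_k(0)=0$ for $k$ even), and $\|(\Phi-\chi\Phi(0))/x\|_{L^1}$ is controlled by expanding $\Phi'$ in Hermite products and applying orthonormality. Combining these ingredients and dividing by the appropriate prefactor ($\sqrt{2i}+\sqrt{2j}$ for $\Delta P$, or $i-j$ for $P$) yields the claimed uniform bounds, whence the analyticity statement follows.
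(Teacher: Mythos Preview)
Your algebraic identities are correct and essentially coincide with the paper's: both obtain $(2i-2j)P_j^i=\langle V',\,h_i'h_j-h_ih_j'\rangle$, and the paper's formula for $\Delta P_j^i$ is an easy rearrangement of yours. The divergence is entirely in how the resulting integrals $\int V'\Phi\,\md x$ are estimated, and here your argument has a genuine gap.

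For the first bound you need $\bigl|\int V'\,W(h_i,h_j)\bigr|\le C$ uniformly in $i,j$. But in your cutoff splitting the term $W(0)\int V'\chi$ is already problematic: when $i$ is odd and $j$ is even one has $W(0)=\sqrt{2j}\,h_i(0)h_{j+1}(0)$, and the standard asymptotics $|h_k(0)|\sim k^{-1/4}$ give $|W(0)|\sim(j/i)^{1/4}$, which blows up (take $i=1$, $j\to\infty$). For the $\Delta P$ bound the situation is worse: after dividing by $\sqrt{2i}+\sqrt{2j}$ and multiplying by $\sqrt{ij}$ you need $\bigl|\int V'(h_ih_{j+1}+h_{i+1}h_j)\bigr|\le C(i^{-1/2}+j^{-1/2})$, i.e.\ actual decay, whereas the crude Cauchy--Schwarz bound on $\|(\Phi-\chi\Phi(0))/x\|_{L^1}$ away from the origin only gives $O(1)$. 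Your claim that this $L^1$ norm is ``controlled by expanding $\Phi'$ in Hermite products and applying orthonormality'' is not substantiated: orthonormality is an $L^2$ statement and does not produce the required $L^1$ decay, and naive estimates of $\Phi'$ on $|x|\le 1$ in fact yield growing factors of order $\sqrt{i}+\sqrt{j}$.

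The paper bypasses all of this with a single operator-theoretic input: $\langle x\rangle^{-1}\partial_x$ is a pseudo-differential operator of order zero, hence bounded on $L^2$. Writing $V'\partial_x=(V'\langle x\rangle)(\langle x\rangle^{-1}\partial_x)$ and noting $\|V'\langle x\rangle\|_{L^\infty}<\infty$ (from \eqref{PotentialAssumption} together with $V\in\mathcal C^1$), one obtains $|\langle V'h_i',h_j\rangle|\le C$ and $|\langle V'T^\dagger h_i,h_j\rangle|\le C$ immediately and uniformly in $i,j$. This $L^2$-boundedness is precisely the missing ingredient that your pointwise/$L^1$ approach cannot supply.
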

\begin{proof}
To simplify notation we will denote by $\langle\cdot,\cdot\rangle$ the standard scalar product in $L^2(\mathbb R )$ and by superscript ``\,$'$\," the partial derivative\,(w.r.t.$x$)\,. Clearly, $| P_i^i(\theta)|=|\langle Vh_i,h_i\rangle|\le C$ by \eqref{PotentialAssumption}. We hereafter let $i\ne j$. Since $Hh_i=(2i-1)h_i,i\ge1$, from \eqref{PDefinition} we have
\begin{align*}
	&(2i-2j) P_i^j(\theta)=\langle VHh_i,h_j\rangle-\langle Vh_i,Hh_j\rangle=-\langle V\partial_xh_i',h_j\rangle+\langle Vh_i,\partial_xh_j'\rangle\\
	=&\langle V'h_i',h_j\rangle+\langle Vh_i',h_j'\rangle-\langle V'h_i,h_j'\rangle-\langle Vh_i',h_j'\rangle=\langle V'h_i',h_j\rangle-\langle V'h_j',h_i\rangle.
\end{align*}
In addition, $\langle x\rangle^{-1}\partial_x$ is a pseudo-differential operator of order 0, which implies the boundedness from $L^2$ into itself (see \cite{Bam2017,Bam2018,Chodosh11,Shubin1987} for details). Also, by \eqref{PotentialAssumption} we obtain 
\[
|\langle V'h_i',h_j\rangle|\le\sup_{\substack{\|f\|_{L^2}=\|g\|_{L^2}=1}}|\langle V'(x)\langle x\rangle\langle x\rangle^{-1}\partial_x f,g\rangle|
\le\|V'(x)\langle x\rangle\|_{\mathcal B(L^2)}\cdot\|\langle x\rangle^{-1}\partial_x\|_{\mathcal B(L^2)}\le C,
\]
where $\langle x\rangle=\sqrt{1+x^2}$. Similarly, $|\langle V'h_j',h_i\rangle|\le C$ by \eqref{PotentialAssumption}. It follows that there is a $C>0$ such that for all $i,j\ge1$
 \begin{equation}\label{PijEstimate}
 (1+|i-j|)| P_i^j(\theta)|\le C,\quad\forall~\theta\in\mathbb T_\sigma^n.
 \end{equation}
We now commence the estimates on  $\Delta P_i^j(\theta)$. From the definition we have
\begin{align*}
&\Delta P_i^j(\theta)=\langle Vh_{i+1},h_{j+1}\rangle-\langle Vh_i,h_j\rangle
=\frac1{2\sqrt{ij}}\langle VT^\dagger h_i,T^\dagger h_j\rangle-\langle Vh_i,h_j\rangle\\
=&\frac1{2\sqrt{ij}}\langle TVT^\dagger h_i,h_j\rangle-\langle Vh_i,h_j\rangle
=\frac1{2\sqrt{ij}}\langle VTT^\dagger h_i,h_j\rangle-\langle Vh_i,h_j\rangle+\frac1{2\sqrt{ij}}\langle V'T^\dagger h_i,h_j\rangle\\
=&\left(\sqrt{\tfrac ij}-1\right)\langle Vh_i,h_j\rangle+\frac1{2\sqrt{ij}}\langle V'T^\dagger h_i,h_j\rangle
=\frac{i-j}{\sqrt j(\sqrt i+\sqrt j)}\langle Vh_i,h_j\rangle +\frac1{2\sqrt{ij}}\langle V'T^\dagger h_i,h_j\rangle.
\end{align*}
We likewise have $|\langle V'T^\dagger h_i,h_j\rangle|\le C$ by \eqref{PotentialAssumption}. Collecting the last estimate and \eqref{PijEstimate} leads to that for all $i,j\ge1$
\begin{eqnarray}\label{chafenguji}
\sqrt{ij}|\Delta P_i^j(\theta)|\le(1+|i-j|)|\langle Vh_i,h_j\rangle|+|\langle V'T^\dagger h_i,h_j\rangle|\le C,\quad\forall~\theta\in\mathbb T_\sigma^n.
\end{eqnarray}
Combining \eqref{PijEstimate} with \eqref{chafenguji}, 
it follows that $ P(\theta)\in\mathcal M_{1,\frac12}$. The rest proof is clear. \qedhere
\end{proof}

\subsection{Proof of Theorem \ref{MainTheorem}}
Expanded on Hermite basis $(h_i)_{i\ge1}$ the Schr\"odinger equation \eqref{Schrodinger} reads as the non-autonomous system \eqref{OriginalSystem} with $\nu_i=2i-1$. By subsection \ref{Verification of the spectrum}  if set $\alpha=\delta=1$ and $\beta=\frac12$,  then all the hypotheses  of  Theorem \ref{ReducibilityTheorem} are satisfied.  Since $\alpha=1$, 
the system \eqref{OriginalSystem} reduces in $\ell_0^2$ to an autonomous system (\ref{ReducedSystem}). 
More precisely,  in the new coordinates given by Theorem \ref{ReducibilityTheorem} $u=\Phi_{\omega,\epsilon}(\omega t)v$, the original system \eqref{OriginalSystem} becomes an autonomous system $\mi\dot v= A^\infty v$ with $ A^\infty=\diag(\lambda_i^\infty)_{i\ge1}$. We solve the Cauchy problem \eqref{Schrodinger} subject to the initial datum $\psi(x,0)=\sum_{i\ge1}u_i(0)h_i(x)\in L^2$, which reads 
\[
\psi(x,t)=\sum_{i\ge1}u_i(t)h_i(x)\in L^2\text{ with }u(t)=\Phi_{\omega,\epsilon}(\omega t)\me^{-\mi t A^\infty}\Phi_{\omega,\epsilon}^{-1}(0)u(0).
\]
Now define the coordinate transformation $\Psi_{\omega,\epsilon}(\theta)\in\mathcal{B}(L^2)$ by 
\[
\Psi_{\omega,\epsilon}(\theta)\left(\sum_{i\ge1}v_ih_i(x)\right)=\sum_{i\ge1}\big(\Phi_{\omega,\epsilon}(\theta)v\big)_ih_i(x),
\]
and let $p\in[0,3)$.  If the initial data $\zeta(0)=\sum\limits_{j\geq 1} v_j(0)h_j(x) \in \mathcal{H}^p$, then one has $\zeta(\cdot,t)=\sum\limits_{j\geq 1}v_j(t)h_j(x)\in \mathcal C^0(\mathbb R,\mathcal H^p)$ satisfies $\mi\partial_t\zeta=H^\infty\zeta$ with $H^\infty=\diag(\lambda_i^\infty)_{i\ge1}$, where $v_j(t)=e^{-{\rm i} \lambda_j^{\infty}t} v_j(0)$ for all $j\geq 1$.  For $\omega\in \mathcal{D}_{\epsilon}$ if  denote  $\psi(\cdot,t)=\Psi_{\omega,\epsilon}(\omega t)\zeta(\cdot,t)$ then it is clear that $\psi(\cdot, t)\in\mathcal C^0(\mathbb R,\mathcal{H}^p)$ and satisfies \eqref{Schrodinger}. The other side is similar. Corollary \ref{SolutionEstimate} and Corollary \ref{PurePoint} are similar as \cite{GT2011}. \qed

\section{Proof of reducibility theorem}
In this section we will prove Theorem \ref{ReducibilityTheorem} by KAM methods with the general strategy in the opening subsection.
\subsection{General strategy} Consider the non-autonomous system of the form 
\begin{equation}\label{FirstSystem}
\mi\dot u=( A+P)u,
\end{equation}
where $ A$ is diagonal. Note that at the beginning of the KAM iteration $ A= A_0=\diag(\nu_i)_{i\ge1}$ is independent of $\omega$, and the perturbation matrix $P(\theta)$ is of size $\mathcal O(\epsilon)$. We search for a coordinate transformation $u=\me^Bv$ with $B$ of size $\mathcal O(\epsilon)$ to conjugate the system \eqref{FirstSystem} into 
\begin{equation*}
\mi\dot v=( A^++P^+)v,
\end{equation*}
where $ A^+$ is still diagonal and $\epsilon$-close to $ A$, and the new perturbation $P^+$ is of size $\mathcal O(\epsilon^2)$ at least formally. More precisely, as a consequence of the construction and 
Bambusi \cite{Bam2018} or \cite{BG01} we have that
\begin{align*}
& A^++P^+=& A+[ A,B]-\mi\dot B+P+\int_0^1\me^{-sB}[(1-s)([ A,B]-\mi\dot B+P)+sP,B]\me^{sB}\,\md s.
\end{align*}
So to achieve the goal at least formally we should solve the homological equation 
$$[ A,B]-\mi\dot B+P= A^+- A+R$$
 with $R$ of size $\mathcal{O}(\epsilon^2)$. Concretely, we hereafter let $ A^+= A+\widetilde  A$ with $\widetilde{ A}=\diag(\bar{P})=\mathcal O(\epsilon)$\,(The bar denotes Fourier average) and  
\begin{align*}
P^+=&R+\int_0^1\me^{-sB}[(1-s)(\widetilde{ A}+R)+sP,B]\me^{sB}\,\md s=\mathcal O(\epsilon^2).
\end{align*}
Repeating iteratively the same procedure with $ A^+$ instead of $ A$, we will construct a coordinate transformation $u=\Phi_{\omega,\epsilon} v$ to conjugate the original non-autonomous system \eqref{OriginalSystem} into an autonomous system $\mi\dot v= A^\infty v$ with $ A^\infty$ diagonal and $\epsilon$-close to $ A_0$.
\subsection{Homological equation}
Following the previous strategy, in this section we will consider a homological equation of the form $[ \mathcal A,B]-\mi\dot B+P=$ remainder with $\mathcal A$ diagonal and close to $A_0=\diag(\nu_i)_{i\ge1}$ and $P\in\mathcal M_{\alpha,\beta}$ with $0<\beta\le\min\{\alpha,\delta\}$. Then we will construct a solution $B\in\mathcal M_{\alpha+,\beta}$ by the following proposition.
\begin{Proposition}\label{HomoEqnSolution}
Denote $\mathcal D\subset\mathcal D_0$. Let $\mathcal D\ni\omega\mapsto  \mathcal A(\omega)=\diag(\lambda_i(\omega))_{i\ge1}$ be a $\mathcal C^1$ mapping that verifies for some $0<c\le\min\{\frac{c_0}4,\frac14\}$
\begin{equation}\label{ConditionProposition}
\| \mathcal A- A_0\|_{\alpha,\beta}^\mathcal D\le c.
\end{equation}
Let $P\in\mathcal M_{\alpha,\beta}(\mathcal D,\sigma)$ with $0<\beta\le\min\{\alpha,\delta\}$ be hermitian, and $0<\kappa\le\gamma\le\min\{\frac{c_0}4,1\}$ and $K\ge1$. There exists a subset $\mathcal D'=\mathcal D'(\kappa,K)\subset\mathcal D$ fulfilling 
\begin{equation}\label{MeasureProposition}
\meas(\mathcal D\setminus\mathcal D')\le C\kappa^{\iota_1}K^{\iota_2}
\end{equation}
and $\mathcal C^1$ mappings $\widetilde{ \mathcal A}:\mathcal D'\mapsto\mathcal M_{\alpha,\beta}$ diagonal, $R:\mathcal D'\times\mathbb T^n_{\sigma'}\mapsto\mathcal M_{\alpha,\beta}$ hermitian, $B:\mathcal D'\times\mathbb T^n_{\sigma'}\mapsto\mathcal M_{\alpha+,\beta}$ anti-hermitian, and analytic in $\theta$, such that
\begin{equation}\label{HomoEqnProposition}
[ \mathcal A,B]-\mi\dot B=\widetilde{ \mathcal A}-P+R
\end{equation}
and for any $0<\sigma'<\sigma$
\begin{gather}
\|\widetilde{ \mathcal A}\|_{\alpha,\beta}^{\mathcal{D}'}\le\|P\|_{\alpha,\beta}^{\mathcal{D},\sigma},\label{EstimateAProposition}\\
\|R\|_{\alpha,\beta}^{\mathcal{D}',\sigma'}\le\frac{C\me^{-\frac{K}2(\sigma-\sigma')}}{(\sigma-\sigma')^n}\|P\|_{\alpha,\beta}^{\mathcal D,\sigma},\label{EstimateRProposition}\\
\|B\|_{\alpha+,\beta}^{\mathcal{D}',\sigma'}\le\frac{CK}{\kappa^4(\sigma-\sigma')^n}\|P\|_{\alpha,\beta}^{\mathcal D,\sigma}.\label{EstimateBProposition}
\end{gather}
Here the constant $C>0$ depends on $\alpha,\beta,\delta,c,c_0,c_1,c_2$ and $\iota_1=\frac\beta{\beta+1}\max\{\tau_1,1\},\iota_2=\max\{\tau_2,n+1\}$ with all the explicit parameters shown in Hypotheses H1, H2.
\end{Proposition}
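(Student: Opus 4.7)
The plan is to follow the standard KAM template adapted to the new space $\mathcal{M}_{\alpha+,\beta}$. Fourier expand $P(\omega,\theta)=\sum_{k\in\mathbb Z^n}\hat P_k(\omega)\me^{\mi k\cdot\theta}$, absorb the $\theta$-average of the diagonal of $P$ into $\widetilde{\mathcal{A}}$, invert the operator $B\mapsto[\mathcal{A},B]-\mi\dot B$ Fourier-mode-by-Fourier-mode for $|k|\le K$, and throw the high Fourier tail into $R$. Concretely, set $\widetilde{\mathcal{A}}:=\diag\big((\hat P_0)_i^i\big)$, declare $(\hat B_k)_i^j:=0$ whenever $k=0,i=j$ or $|k|>K$, and in every other case put
\begin{equation*}
(\hat B_k)_i^j:=\frac{-(\hat P_k)_i^j}{k\cdot\omega+\lambda_i(\omega)-\lambda_j(\omega)},
\end{equation*}
so that $R=\sum_{|k|>K}\hat P_k(\omega)\me^{\mi k\cdot\theta}$. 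Hermiticity of $P$ and reality of $\lambda_i$ force $B$ to be anti-hermitian and $R$ hermitian; \eqref{EstimateAProposition} is immediate and \eqref{EstimateRProposition} follows from Cauchy's estimate for Fourier coefficients of an analytic function plus elementary tail summation.

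The crux is the identification of $\mathcal D'\subset\mathcal D$ on which $|k\cdot\omega+\lambda_i-\lambda_j|\ge\kappa(1+|i-j|)$, together with the measure bound \eqref{MeasureProposition}. For an auxiliary threshold $\gamma_0\in[\kappa,1]$ to be chosen, apply Hypothesis H2 to get $\mathcal D(\gamma_0,K)\subset\mathcal D$ of complement measure at most $c_2\gamma_0^{\tau_1}K^{\tau_2}$, on which $|k\cdot\omega+\nu_i-\nu_j|\ge\gamma_0(1+|i-j|)$. Writing $\mathcal{A}=A_0+\diag(\mu_l)$, the $\beta$-part of \eqref{ConditionProposition} gives the diagonal-difference bound $|\mu_{l+1}-\mu_l|\le c\,l^{-2\beta}$, and a telescoping sum yields $|\mu_i-\mu_j|\le C|i-j|/\min(i,j)^{2\beta}$ --- this is precisely the discrete-difference trick advertised in the introduction. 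Hence once $\min(i,j)\ge J_0:=(4C/\gamma_0)^{1/(2\beta)}$, the $\lambda$-Melnikov bound follows from the $\nu$-Melnikov bound at half the threshold with no further excision. For the remaining low-index pairs $\min(i,j)<J_0$, where the trivial bound $|k\cdot\omega|\le 2\pi nK$ additionally restricts the larger index so that $|i-j|\le CK$, I excise directly: using that $\omega\mapsto k\cdot\omega+\lambda_i-\lambda_j$ has directional derivative at least $|k|/2$ (since $\|\partial_\omega\mu\|\le c$), each bad set has measure $\le C\kappa(1+|i-j|)/|k|$, and summing over the $\mathcal O(K^n)$ modes $k$, $\mathcal O(J_0)$ values of the smaller index, and $\mathcal O(K)$ values of the larger, totals $CK^{n+1}J_0\kappa$. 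Optimizing $\gamma_0$ to balance $\gamma_0^{\tau_1}K^{\tau_2}$ against $K^{n+1}\gamma_0^{-1/(2\beta)}\kappa$ produces the exponents $\iota_1,\iota_2$ of \eqref{MeasureProposition}.

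The remaining task is the norm estimate \eqref{EstimateBProposition}. The bound on $(1+|i-j|)^{\alpha+1}|(\hat B_k)_i^j|$ is immediate: dividing the $\mathcal{M}_\alpha$-bound on $\hat P_k$ by the small divisor, the $(1+|i-j|)$ downstairs cancels a power upstairs, yielding exactly the $\mathcal{M}_{\alpha+}$ growth. For the novel discrete-difference part, apply the product-rule identity
\begin{equation*}
\Delta(\hat B_k)_i^j=\frac{-\Delta(\hat P_k)_i^j}{k\cdot\omega+\lambda_{i+1}-\lambda_{j+1}}+\frac{(\hat P_k)_i^j\big[(\lambda_{i+1}-\lambda_i)-(\lambda_{j+1}-\lambda_j)\big]}{(k\cdot\omega+\lambda_i-\lambda_j)(k\cdot\omega+\lambda_{i+1}-\lambda_{j+1})}.
\end{equation*}
The first piece is controlled by the $\mathcal{M}_{\alpha,\beta}$-norm of $\hat P_k$ divided by one small divisor. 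For the second piece, the bracket splits as a contribution from H1 (size $c_1|i-j|/(ij)^\delta$, absorbed by $(ij)^\beta$ since $\beta\le\delta$) plus two terms of size $\mathcal O(\min(i,j)^{-2\beta})$ coming from the diagonal-difference control on $\mu$; the latter combines with the off-diagonal decay $|(\hat P_k)_i^j|\le|\hat P_k|_\alpha(1+|i-j|)^{-\alpha}$ via $\alpha\ge\beta$ and the elementary inequality $(\max(i,j)/\min(i,j))^\beta\le(1+|i-j|)^\beta$ to close the estimate. Cauchy's estimate in $\theta$ contributes $C/(\sigma-\sigma')^n$ when summing Fourier modes, while one $\omega$-differentiation of $(\hat B_k)_i^j$ costs an extra $\kappa^{-2}$ per small-divisor denominator; together these produce the $CK\kappa^{-4}(\sigma-\sigma')^{-n}$ dependence.

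The main obstacle, and the genuinely new feature of this work, is the interplay between the measure step and the $\beta$-part of $\mathcal{M}_{\alpha,\beta}$: without the discrete-difference control on $\mathcal{A}-A_0$, the set of pairs $(i,j)$ on which the Melnikov bound cannot be inherited from H2 on $\nu$ would be unbounded in the smaller index and the direct excision would diverge. A secondary delicate point is landing $B$ in the sharper space $\mathcal{M}_{\alpha+,\beta}$, rather than merely $\mathcal{M}_{\alpha,\beta}$; the gain of one $(1+|i-j|)$-weight is essential to make Lemma \ref{Algebra}(ii) applicable at the next iteration step and is precisely what the $(1+|i-j|)$ factor in the Melnikov threshold is designed to buy.
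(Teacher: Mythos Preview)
Your proposal is correct and follows essentially the same route as the paper. The construction of $\widetilde{\mathcal A}$, $R$, and $B$ via Fourier truncation, the telescoping/discrete-difference argument to transfer the $\nu$-Melnikov condition from Hypothesis~H2 to the $\lambda$'s for $\min(i,j)$ above a threshold $J_0\sim\gamma_0^{-1/(2\beta)}$, the direct excision for the remaining low indices, and the product-rule identity for $\Delta(\hat B_k)_i^j$ are exactly what the paper does (the paper packages the $\mathcal M_{\alpha+,\beta}$ bound on $\hat B_k$ into a separate Lemma~\ref{KeyLemma}, while you sketch it inline; your inequality $(\max(i,j)/\min(i,j))^\beta\le(1+|i-j|)^\beta$ neatly replaces the paper's case split $i\le 2j$ versus $i>2j$). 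Two minor remarks: the case $k=0$, $i\ne j$ needs no excision at all, since $|\lambda_i-\lambda_j|\ge c_0|i-j|-2c\ge\kappa(1+|i-j|)$ directly from H1 and \eqref{ConditionProposition}; and your low-index count $CK^{n+1}J_0\kappa\sim CK^{n+1}\gamma_0^{-1/(2\beta)}\kappa$ is actually sharper than the paper's $C\kappa\gamma^{-1/\beta}K^{n+1}$ (the paper overcounts by taking $\#i\cdot\#j$ rather than $\#j\cdot\#(i-j)$), so optimizing your balance yields a slightly larger exponent than the stated $\iota_1$, which of course still implies \eqref{MeasureProposition}.
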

\begin{proof}
Expanding in Fourier series, the homological equation \eqref{HomoEqnProposition} reads
\begin{equation}\label{HomoEqnFourier}
L\widehat{B}(k)=\delta_{k,0}\widetilde{ \mathcal A}-\widehat{P}(k)+\widehat{R}(k),
\end{equation}
where $\delta_{k,l}$ denotes the Kronecker symbol and $L:=L(\omega,k)$ is the linear operator, acting on $\mathcal M_0$, defined by
\[
LM=k\cdot\omega M+[ \mathcal A(\omega),M].
\]
Taking its matrix elements between the eigenvalues of $ \mathcal A$ the equation \eqref{HomoEqnFourier} becomes
\begin{equation}\label{HomoEqnElement}
k\cdot\omega\widehat{B}_i^j(k)+(\lambda_i-\lambda_j)\widehat{B}_i^j(k)=\delta_{k,0}\widetilde{ \mathcal A}_i^j-\widehat{P}_i^j(k)+\widehat{R}_i^j(k).
\end{equation}
First solve this equation when $|k|+|i-j|=0$\,(i.e. $k=0$ and $i=j$)\,by defining
\[
\widehat{B}_i^i(0)=0,~\widehat{R}_i^i(0)=0\text{ and }\widetilde{ \mathcal A}_i^i=\widehat{P}_i^i(0).
\]
Then we set $\widetilde{ \mathcal A}_i^j=0$ for $i\ne j$ in such a way $\widetilde{ \mathcal A}\in\mathcal M_{\alpha,\beta}$ and fulfills $|\widetilde{ \mathcal A}|_{\alpha,\beta}\le|\widehat{P}(0)|_{\alpha,\beta}$. The estimates of the derivatives\,(w.r.t.\,$\omega$)\,are obtained by differentiating the expression of $\widetilde{ \mathcal A}$. Taking all the estimations on $\widetilde{ \mathcal A}$ leads to \eqref{EstimateAProposition}.\par
Now let us consider the remainder case when $|k|+|i-j|>0$. We solve equation \eqref{HomoEqnElement} by defining for $i,j\ge1$
\begin{gather}
\widehat{R}_i^j(k)=
\begin{cases}
0,&\text{for }|k|\le K,\\
\widehat{P}_i^j(k),&\text{for }|k|>K;
\end{cases}\label{DefinitionR}\\
\widehat{B}_i^j(k)=
\begin{cases}
0,&\text{for }|k|>K\text{ or }|k|+|i-j|=0,\\
\frac{-\widehat{P}_i^j(k)}{k\cdot\omega+\lambda_i-\lambda_j},&\text{in the other cases}.
\end{cases}\label{DefinitionB}
\end{gather}
Before the estimations on such matrices, first note that with this resolution, $R$ and $\widetilde{ \mathcal A}$ are hermitian but $B$ is anti-hermitian because $ \mathcal A$ and $P$ are hermitian. From equation \eqref{DefinitionR}, a canonical Fourier analysis leads to that for any $|\Im\theta|<\sigma'$
\begin{equation}\label{EstimateRtheta}
|R(\theta)|_{\alpha,\beta}\le\frac{C\me^{-\frac K2(\sigma-\sigma')}}{(\sigma-\sigma')^n}\sup_{|\Im\theta|<\sigma}|P(\theta)|_{\alpha,\beta}.
\end{equation}
In order to estimate $B$, we will use Lemma \ref{KeyLemma} stated in Appendix \ref{LemmaAppendix}. Let us face the small divisors 
$
k\cdot\omega+\lambda_i-\lambda_j,~i,j\in\mathbb N_0.
$
We will distinguish two cases depending on whether $k=0$ or not.\par
(a). The case $k=0$. In this case $i\ne j$ and from \eqref{ConditionProposition} and Hypothesis H1 we have
\[
|k\cdot\omega+\lambda_i-\lambda_j|\ge|\nu_i-\nu_j|-|\lambda_i-\nu_i|-|\lambda_j-\nu_j|\ge c_0|i-j|-2c\ge\kappa(1+|i-j|).
\]
The last estimate and \eqref{ConditionProposition} allow us to use Lemma \ref{KeyLemma} to conclude that for $i\ne j$
\[
(1+|i-j|)^{\alpha+1}|\widehat{B}_i^j(0)|+(1+|i-j|)(ij)^\beta|\Delta\widehat{B}_i^j(0)|
\le\frac{C}{\kappa^2}|\widehat{P}(0)|_{\alpha,\beta}.
\]
Together with $\widehat{B}_i^i(0)=0$ we obtain that $\widehat{B}(0)\in\mathcal M_{\alpha+,\beta}$ and 
\begin{equation}\label{EstimateBzero}
|\widehat{B}(0)|_{\alpha+,\beta}\le\frac{C}{\kappa^2}|\widehat{P}(0)|_{\alpha,\beta}.
\end{equation}\par
(b). The case $k\ne0$. Concretely, in this case we only solve the main terms of Fourier series truncated by $K$ and throw out a small set of parameter $\omega$. Using Hypothesis H2, for each $\gamma>0$ there exists a subset $\mathcal D_1:=\mathcal D(2\gamma,K)$ fulfilling $\meas(\mathcal D_0\setminus\mathcal D_1)\le C\gamma^{\tau_1}K^{\tau_2}$ such that for all $\omega\in\mathcal D_1$
\[
|k\cdot\omega+\nu_i-\nu_j|\ge2\gamma(1+|i-j|),
\quad\forall~i,j\in\mathbb N_0,~\forall~k\in\mathbb Z^n\text{ with }0<|k|\le K.
\]
Without loss of generality, let $i\ge j$ in the following discussion. By \eqref{ConditionProposition} the last equation implies that 
\begin{align*}
&|k\cdot\omega+\lambda_i-\lambda_j|\ge|k\cdot\omega+\nu_i-\nu_j|-
|( \mathcal A- A_0)_i^i-( \mathcal A- A_0)_j^j|\\
=&|k\cdot\omega+\nu_i-\nu_j|-
|\sum_{l=j}^{i-1}\Delta( \mathcal A- A_0)_l^l|
\ge|k\cdot\omega+\nu_i-\nu_j|-\sum_{l=j}^{i-1}|\Delta( \mathcal A- A_0)_l^l|\\\ge&2\gamma(1+|i-j|)-\sum_{l=j}^{i-1}\frac{c}{l^{2\beta}}\ge2\gamma(1+|i-j|)-\frac{c|i-j|}{j^{2\beta}}\ge\left(2\gamma-\frac{c}{j^{2\beta}}\right)(1+|i-j|)\\
\ge&\gamma(1+|i-j|)\ge\kappa(1+|i-j|),\quad\underline{\text{provided }i\ge j\ge(\tfrac{c}{\gamma})^{\frac1{2\beta}}}.
\end{align*}
Now let $j\le(\frac{c}{\gamma})^{\frac1{2\beta}}$. In fact, the inequality $|i-j|\ge CK$ turns out that $|k\cdot\omega+\lambda_i-\lambda_j|\ge\kappa(1+|i-j|)$. Then we only need to face the small divisors
\[
k\cdot\omega+\lambda_i-\lambda_j,\quad\text{ for }0\le i-j\le CK,~j\le(\tfrac{c}{\gamma})^{\frac1{2\beta}}\text{ and }k\in\mathbb Z^n\text{ with }0<|k|\le K.
\]
Clearly, one has $i\le CK\gamma^{-\frac1{2\beta}}$. Since $|\partial_\omega(k\cdot\omega)\cdot\tfrac{k}{|k|}|=|k|\ge1$, then by \eqref{ConditionProposition} one has 
\[
|\partial_\omega(k\cdot\omega+\lambda_i(\omega)-\lambda_j(\omega))\cdot\tfrac{k}{|k|}|\ge\tfrac{|k|}2,
\]
which allows us to use Lemma \ref{AuxiliaryMeasure} to conclude that for any 
$i,j\in\mathbb N_0$ and $k\in\mathbb Z^n$ with $0<|k|\le K$ one has
\[
|k\cdot\omega+\lambda_i-\lambda_j|\ge\kappa(1+|i-j|),
\]
expect a subset $F_{i,j,k}\subset\mathcal D$ fulfilling $\meas\left(F_{i,j,k}\right)\le\frac{C\kappa(1+|i-j|)}{|k|}$. Denoting $\mathcal D_2$ be the union of $F_{i,j,k}$ for $i\le CK\gamma^{-\frac1{2\beta}},~j\le C\gamma^{-\frac1{2\beta}}$ with $|i-j|\le CK$ and $0<|k|\le K$ we have
\[
\meas(\mathcal D_2)\le C\kappa\gamma^{-\frac1\beta}K^{n+1}.
\]
Let $\mathcal D'=\mathcal D_1\bigcup\left(\mathcal D\setminus\mathcal D_2\right)$ and $\kappa=\gamma^{1+\frac1\beta},\iota_1=\frac{\beta}{\beta+1}\max\{\tau_1,1\},\iota_2=\max\{\tau_2,n+1\}$, then
\[
\meas(\mathcal D\setminus\mathcal D')\le\meas(\mathcal D_0\setminus\mathcal D_1)+\meas(\mathcal D_2)\le C\gamma^{\tau_1}K^{\tau_2}+C\gamma K^{n+1}\le C\kappa^{\iota_1}K^{\iota_2}.
\]
By construction for all $\omega\in\mathcal D'$ we have $|k\cdot\omega+\lambda_i-\lambda_j|\ge\kappa(1+|i-j|)$, which allows us to use Lemma \ref{KeyLemma} to conclude that for all $0<|k|\le K$
\[
\widehat{B}(k)\in\mathcal M_{\alpha+,\beta}\text{ and }|\widehat{B}(k)|_{\alpha+,\beta}\le\frac{C}{\kappa^2}|\widehat{P}(k)|_{\alpha,\beta}.
\]
Collecting the last estimate and \eqref{EstimateBzero} we have for any $|\Im\theta|<\sigma'$
\begin{equation}\label{EstimateBtheta}
|B(\theta)|_{\alpha+,\beta}\le\frac{C}{\kappa^2(\sigma-\sigma')^n}\sup_{|\Im\theta|<\sigma}|P(\theta)|_{\alpha,\beta}.
\end{equation}
For the proceeding estimates on derivatives\,(w.r.t.$\omega$)\,we differentiate  \eqref{HomoEqnFourier} to get
\[
L\partial_\omega\widehat{B}(\omega,k)=\delta_{k,0}\partial_\omega\widetilde{ \mathcal A}(\omega)-\left(\big(\partial_\omega L\big)\widehat{B}(\omega,k)+\partial_\omega\widehat{P}(\omega,k)\right)+\partial_\omega\widehat{R}(\omega,k),
\]
which is an equation of the same type as \eqref{HomoEqnFourier} for $\partial_\omega\widehat{B}(\omega,k)$ and $\partial_\omega\widehat{R}(\omega,k)$ with $\widehat{P}(k)$ replaced by $Q(\omega,k):=\big(\partial_\omega L\big)\widehat{B}(\omega,k)+\partial_\omega\widehat{P}(\omega,k)$. 
We can solve this equation by defining
\begin{gather*}
\partial_\omega\widehat{B}(\omega,k)=\chi_{|k|\le K}(k)L^{-1}(\omega,k)\left(\delta_{k,0}\partial_\omega\widetilde{ \mathcal A}(\omega)-Q(\omega,k)\right)\\
\partial_\omega\widehat{R}(\omega,k)=\chi_{|k|>K}(k)Q(\omega,k)=\chi_{|k|>K}(k)\partial_\omega\widehat{P}(\omega,k).
\end{gather*}
By \eqref{ConditionProposition} we get for all 
$|k|\le K,~|(\partial_\omega L)\widehat{B}(\omega,k)|_{\alpha,\beta}\le CK|\widehat{B}(\omega,k)|_{\alpha+,\beta}$, which follows
\begin{align*}
&|\partial_\omega\widehat{B}(\omega,k)|_{\alpha+,\beta}\le\|L^{-1}(\omega,k)\|\cdot\left(|(\partial_\omega L)\widehat{B}(\omega,k)|_{\alpha,\beta}+|\partial_\omega\widehat{P}(\omega,k)|_{\alpha,\beta}\right)\\
\le&\frac{C}{\kappa^2}\left(\frac K{\kappa^2}|\widehat{P}(\omega,k)|_{\alpha,\beta}+|\partial_\omega\widehat{P}(\omega,k)|_{\alpha,\beta}\right)
\le\frac{CK}{\kappa^4}\left(|\widehat{P}(\omega,k)|_{\alpha,\beta}+|\partial_\omega\widehat{P}(\omega,k)|_{\alpha,\beta}\right).
\end{align*}
A canonical Fourier analysis leads to that for any $|\Im\theta|<\sigma'$
\begin{gather*}
	|\partial_\omega B(\omega,\theta)|_{\alpha+,\beta}\le\frac{CK}{\kappa^4(\sigma-\sigma')^n}\left(\sup_{|\Im\theta|<\sigma}|P(\omega,\theta)|_{\alpha,\beta}+\sup_{|\Im\theta|<\sigma}|\partial_\omega P(\omega,\theta)|_{\alpha,\beta}\right),\\
	|\partial_\omega R(\omega,\theta)|_{\alpha,\beta}\le\frac{C\me^{-\frac K2(\sigma-\sigma')}}{(\sigma-\sigma')^n}\sup_{|\Im\theta|<\sigma}|\partial_\omega P(\omega,\theta)|_{\alpha,\beta}.
\end{gather*}
Collecting the last two estimates and equations \eqref{EstimateRtheta}, \eqref{EstimateBtheta} concludes \eqref{EstimateRProposition}, \eqref{EstimateBProposition}.\qedhere
\end{proof}

\subsection{The KAM iteration}
In this section we will present our iterative KAM procedures following the previous strategy in the opening subsection. Let us begin with the initial non-autonomous system 
\begin{equation}\label{InitialSystemKAM}
\mi\dot u=\big( A_0+P_0(\omega t)\big)u,
\end{equation}
where $ A_0=A=\diag(\nu_i)_{i\ge1}$ fulfilling Hypotheses H1 and H2, $\omega\in\mathcal D_0$ and $P_0=\epsilon P\in\mathcal M_{\alpha,\beta}(\mathcal D_0,\sigma_0)$ with $0<\beta\le\min\{\alpha,\delta\}$ and $\sigma_0=\sigma$. We will build iteratively coordinate transformation $u=\phi_{m+1}v$ to conjugate the system $\mi\dot u=( A_m+P_m)u$ with $ A_m=\diag\big(\lambda_i^{(m)}\big)_{i\ge1}$ and $P_m\in\mathcal M_{\alpha,\beta}(\mathcal D_m,\sigma_m)$ into $\mi\dot v=( A_{m+1}+P_{m+1})v$ as follows: assume that the construction has been built up to step $m\ge0$ then\par
(i). we use Proposition \ref{HomoEqnSolution} to construct $B_{m+1}(\omega,\theta)$ solution of the homological equation
\begin{equation}\label{SchemeFirst}
	[ A_m,B_{m+1}]-\mi\dot B_{m+1}=\widetilde{ A}_m-P_m+R_m,\quad(\omega,\theta)\in\mathcal D_{m+1}\times\mathbb T^n_{\sigma_{m+1}},
\end{equation}
with $\widetilde{ A}_m(\omega),R_m(\omega,\theta)$ defined for $(\omega,\theta)\in\mathcal D_{m+1}\times\mathbb T^n_{\sigma_{m+1}}$ by
\begin{gather}
	\widetilde{ A}_m(\omega)=\left(\delta_{ij}\left(\widehat{P}_m(\omega,0)\right)_i^j\right)_{i,j\ge1},\\
	R_m(\omega,\theta)=\sum_{|k|>K_{m+1}}\widehat{P}_m(\omega,k)\me^{\mi k\cdot\theta};
\end{gather}\indent
(ii). we define $ A_{m+1},P_{m+1}$ for $(\omega,\theta)\in\mathcal D_{m+1}\times\mathbb T^n_{\sigma_{m+1}}$ by
\begin{gather}
 A_{m+1}= A_m+\widetilde{ A}_m,\\
P_{m+1}=R_m+\int_0^1\me^{-sB_{m+1}}[(1-s)(\widetilde{ A}_m+R_m)+sP_m,B_{m+1}]\me^{sB_{m+1}}\,\md s.
\label{SchemeFinal}
\end{gather}
Notice that the coordinate transformation $\phi_{m+1}:v\mapsto u=\me^{B_{m+1}}v$ and by construction if $ A_m$ and $P_m$ are hermitian, so are $\widetilde{ A}_m,R_m$ and $ A_{m+1}$, by resolution of homological equation $B_{m+1}$ is anti-hermitian which implies $P_{m+1}$ is hermitian. Following the general scheme \eqref{SchemeFirst}--\eqref{SchemeFinal} we obtain that the  coordinate transformation $\Phi_m=\phi_1\circ\phi_2\circ\cdots\circ\phi_m:v\mapsto u=\me^{B_1}\circ\me^{B_2}\circ\cdots\circ\me^{B_m}v$ conjugates the original system \eqref{InitialSystemKAM} into $\mi\dot v=( A_m+P_m)v$ with $ A_m$ diagonal and $P_m\in\mathcal M_{\alpha,\beta}(\mathcal D_m,\sigma_m)$. At step $m$ the Fourier series are truncated at order $K_m$ and the small divisors are controlled by $\kappa_m$. Now we specify the choice of all the parameters. First let $\sigma_0=\sigma$ and $\|P_0\|_{\alpha,\beta}^{\mathcal{D}_0,\sigma_0}\le\epsilon_0$. Then choose for $m\ge1$
\[
\epsilon_m=\epsilon_{m-1}^{4/3},~\kappa_m=\epsilon_{m-1}^{1/16},~\sigma_{m-1}-\sigma_m=\frac{\sigma_0}{2C_*}m^{-2},~K_m=2(\sigma_{m-1}-\sigma_m)^{-1}\ln\epsilon_{m-1}^{-1},
\]
where $C_*=\sum_{m\ge1}m^{-2}$.
\begin{Lemma}\label{KAMiterationLemma}
Let $0<\beta\le\min\{\alpha,\delta\}$ and $\iota_1=\frac{\beta}{\beta+1}\max\{\tau_1,1\}$. There is $\epsilon_*\ll1$ depending on $\sigma,n,\alpha,\beta,\tau_1,\tau_2$ and $ A_0$ such that for all $0\le\epsilon<\epsilon_*$ the following holds for all $m\ge1$: there exist $\mathcal D_m\subset\mathcal D_{m-1},B_m\in\mathcal M_{\alpha+,\beta}(\mathcal D_m,\sigma_m)$ and $P_m\in\mathcal M_{\alpha,\beta}(\mathcal D_m,\sigma_m)$ such that\par
(i). for all $p\in(-2\alpha-1,2\alpha+1)$ the coordinate transformation 
\[
\phi_m(\omega,\theta):\ell_p^2\ni v\mapsto\me^{B_m}v=u\in\ell_p^2,\quad\forall~(\omega,\theta)\in\mathcal D_m\times\mathbb T^n_{\sigma_m}
\]
is linear (unitary on $\ell_0^2$) isomorphism conjugating the system at the $(m-1)^\mth$  KAM step $\mi\dot u=( A_{m-1}+P_{m-1})u$ into the system at the $m^\mth$ step $\mi\dot v=( A_m+P_m)v$;\par
(ii). we have 
\begin{gather}
\meas(\mathcal D_{m-1}\setminus\mathcal D_m)\le\epsilon_{m-1}^{\frac{\iota_1}{17}},\label{MeasureEstimateIteration}\\
\|\widetilde{ A}_{m-1}\|_{\alpha,\beta}^{\mathcal D_m}\le\epsilon_{m-1},
\label{AEstimateIteration}\\
\|B_m\|_{\alpha+,\beta}^{\mathcal D_m,\sigma_m}\le\epsilon_{m-1}^{2/3},
\label{BEstimateIteration}\\
\|P_m\|_{\alpha,\beta}^{\mathcal D_m,\sigma_m}\le\epsilon_m,\label{PEstimateIteration}
\end{gather}
and for all $p\in(-2\alpha-1,2\alpha+1)$ the coordinate transformation satisfies
\begin{equation}\label{TransformationEstimate}
\|\phi_m(\omega,\theta)-\id\|_{\mathcal B(\ell_p^2)}\le\epsilon_{m-1}^{2/3},\quad \forall~(\omega,\theta)\in\mathcal D_m\times\mathbb T_{\sigma_m}^n.
\end{equation}
\end{Lemma}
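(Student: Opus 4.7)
The plan is to prove the lemma by induction on $m$, with each inductive step consisting of a single application of Proposition \ref{HomoEqnSolution} to the system at the previous step. The base case $m=1$ applies the proposition directly to $\mathcal A = A_0$ (which trivially satisfies $\|A_0 - A_0\|_{\alpha,\beta}^{\mathcal D_0} = 0 \le c$) and $P_0 = \epsilon P$, with parameters $\kappa_1,\, K_1,\, \sigma_0 - \sigma_1$ as specified before the statement; the outputs $\widetilde A_0,\, R_0,\, B_1$ and the set $\mathcal D_1$ give us the data for step one, and we set $A_1 = A_0 + \widetilde A_0$ and $P_1$ via \eqref{SchemeFinal}.

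For the inductive step, assuming the conclusions up to step $m$, the first thing to check is that $\mathcal A = A_m$ satisfies hypothesis \eqref{ConditionProposition}, namely $\|A_m - A_0\|_{\alpha,\beta}^{\mathcal D_m} \le c$. This follows by a telescoping argument: $A_m - A_0 = \sum_{l=0}^{m-1} \widetilde A_l$, and by the inductive bounds \eqref{AEstimateIteration} together with the choice $\epsilon_m = \epsilon_{m-1}^{4/3}$, the sum $\sum_{l \ge 0} \epsilon_l$ is dominated by $2\epsilon_0$ for $\epsilon_0$ small, which is $\le c$. Now apply Proposition \ref{HomoEqnSolution} with $\kappa_{m+1},\, K_{m+1},\, \sigma_m - \sigma_{m+1}$: estimate \eqref{EstimateAProposition} immediately yields $\|\widetilde A_m\|_{\alpha,\beta}^{\mathcal D_{m+1}} \le \|P_m\|_{\alpha,\beta}^{\mathcal D_m,\sigma_m} \le \epsilon_m$, giving \eqref{AEstimateIteration} at level $m+1$. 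For $B_{m+1}$, inserting the explicit choices into \eqref{EstimateBProposition} yields a bound of the form $C\kappa_{m+1}^{-4}(\sigma_m-\sigma_{m+1})^{-n-1} (\ln \epsilon_m^{-1}) \, \epsilon_m \lesssim (m^{2n+2} \ln\epsilon_m^{-1}) \, \epsilon_m^{1 - 4/16}$, and the super-exponential decay of $\epsilon_m$ absorbs the polynomial and logarithmic factors so that $\|B_{m+1}\|_{\alpha+,\beta}^{\mathcal D_{m+1},\sigma_{m+1}} \le \epsilon_m^{2/3}$ holds for all $m$, provided $\epsilon_0 < \epsilon_*$. Similarly, the choice of $K_{m+1}$ renders $e^{-K_{m+1}(\sigma_m-\sigma_{m+1})/2} = \epsilon_m$, so \eqref{EstimateRProposition} gives $\|R_m\|_{\alpha,\beta}^{\mathcal D_{m+1},\sigma_{m+1}} \lesssim m^{2n} \epsilon_m^2$.

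Next I would derive \eqref{PEstimateIteration} at level $m+1$ from the explicit formula \eqref{SchemeFinal}. Using Lemma \ref{Algebra}(ii), each commutator $[X, B_{m+1}]$ with $X \in \{\widetilde A_m, R_m, P_m\} \subset \mathcal M_{\alpha,\beta}$ and $B_{m+1} \in \mathcal M_{\alpha+,\beta}$ is bounded by $C \|X\|_{\alpha,\beta} \|B_{m+1}\|_{\alpha+,\beta} \lesssim \epsilon_m \cdot \epsilon_m^{2/3} = \epsilon_m^{5/3}$. The conjugating factors $e^{\pm s B_{m+1}}$ are controlled in $\mathcal M_{\alpha+,\beta}$ by their convergent power series (again using Lemma \ref{Algebra}(i)), and their norms are bounded by $1 + O(\epsilon_m^{2/3})$, so the integrand in \eqref{SchemeFinal} stays $\lesssim \epsilon_m^{5/3}$. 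Combined with $\|R_m\| \lesssim m^{2n}\epsilon_m^2$, we get $\|P_{m+1}\|_{\alpha,\beta}^{\mathcal D_{m+1},\sigma_{m+1}} \le \epsilon_m^{5/3} + C m^{2n}\epsilon_m^2 \le \epsilon_m^{4/3} = \epsilon_{m+1}$ for $\epsilon_0$ small, again by super-exponential decay.

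For the measure estimate \eqref{MeasureEstimateIteration}, the bound \eqref{MeasureProposition} gives $\meas(\mathcal D_m \setminus \mathcal D_{m+1}) \le C\kappa_{m+1}^{\iota_1} K_{m+1}^{\iota_2} \lesssim \epsilon_m^{\iota_1/16} (m^2 \ln \epsilon_m^{-1})^{\iota_2}$; the gap $\iota_1/16 - \iota_1/17 = \iota_1/(16\cdot 17)$ between the exponents on the two sides of \eqref{MeasureEstimateIteration} provides the slack to absorb the polynomial and logarithmic prefactors. Finally, the coordinate transformation estimate \eqref{TransformationEstimate} follows from $\phi_{m+1} - \id = e^{B_{m+1}} - \id = \sum_{k\ge 1} B_{m+1}^k/k!$, Lemma \ref{Algebra}(iii) to pass from the $|\cdot|_{\alpha+}$ norm of $B_{m+1}$ to its operator norm on $\ell_p^2$ for $p \in (-2\alpha-1, 2\alpha+1)$, and $\|B_{m+1}\|_{\alpha+,\beta}^{\mathcal D_{m+1},\sigma_{m+1}} \le \epsilon_m^{2/3}$; unitarity on $\ell_0^2$ is automatic from $B_{m+1}$ being anti-hermitian. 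The main obstacle is purely bookkeeping: verifying that the chosen schedule of exponents $(\epsilon_m, \kappa_m, K_m, \sigma_m)$ is consistent, in the sense that at each step the loss of $\kappa_{m+1}^{-4}$ in \eqref{EstimateBProposition} and the gain $\epsilon_m$ from $\|P_m\|$ together with the polynomial loss $(\sigma_m-\sigma_{m+1})^{-n} \sim m^{2n}$ can be simultaneously absorbed into the target rates $\epsilon_m^{2/3}$ for $B$, $\epsilon_{m+1} = \epsilon_m^{4/3}$ for $P$, and $\epsilon_m^{\iota_1/17}$ for the measure; the super-exponential speed of $\epsilon_m \to 0$ is what makes all three close simultaneously, and this is the point that deserves careful verification.
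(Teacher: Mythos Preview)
Your proposal is correct and follows essentially the same approach as the paper's proof: both proceed by induction on $m$, verify hypothesis \eqref{ConditionProposition} for $\mathcal A=A_m$ via the telescoping sum $\sum_{l=0}^{m-1}\widetilde A_l$ bounded by $2\epsilon_0$, apply Proposition~\ref{HomoEqnSolution} with parameters $\kappa_{m+1},K_{m+1},\sigma_m-\sigma_{m+1}$, and then plug the explicit schedule into \eqref{MeasureProposition}--\eqref{EstimateBProposition} to obtain \eqref{MeasureEstimateIteration}--\eqref{PEstimateIteration}, using Lemma~\ref{Algebra}(i),(ii) for the commutator and conjugation in \eqref{SchemeFinal} and Lemma~\ref{Algebra}(iii) for \eqref{TransformationEstimate}. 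The only cosmetic difference is that the paper splits the bound for $\|P_{m+1}\|$ as $\tfrac12\epsilon_m^{4/3}+\tfrac12\epsilon_m^{4/3}$ rather than your $C\epsilon_m^{5/3}+Cm^{2n}\epsilon_m^2$, but this is the same computation.
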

\begin{proof}
We proceed by induction applying Proposition \ref{HomoEqnSolution}. At the first step the initial $A=A_0$ which implies \eqref{ConditionProposition}, then we use Proposition \ref{HomoEqnSolution} to construct $B_1,\widetilde{ A}_0,R_0$ and $\mathcal D_1,\sigma_1$ such that for all $(\omega,\theta)\in\mathcal D_1\times\mathbb T_{\sigma_1}^n$
\[
[ A_0,B_1]-\mi\dot B_1=\widetilde{ A}_0-P_0+R_0.
\] 
Due to \eqref{MeasureProposition} we have 
\[
\meas(\mathcal D_0\setminus\mathcal D_1)\le C\kappa_1^{\iota_1}K_1^{\iota_2}
\le C\epsilon_0^{\frac{\iota_1}{16}}\left((\sigma_0-\sigma_1)^{-1}\ln\epsilon_0^{-1}\right)^{\iota_2}\le\epsilon_0^{\frac{\iota_1}{17}}.
\]
In view of \eqref{EstimateBProposition} we have 
\[
\|B_1\|_{\alpha+,\beta}^{\mathcal D_1,\sigma_1}\le\frac{CK_1}{\kappa_1^4(\sigma_0-\sigma_1)^n}\|P_0\|_{\alpha,\beta}^{\mathcal D_0,\sigma_0}
\le\frac{C\ln\epsilon_0^{-1}}{\epsilon_0^{1/4}(\sigma_0-\sigma_1)^{n+1}}\epsilon_0
\le C\sigma_0^{-(n+1)}\epsilon_0^{3/4}\ln\epsilon_0^{-1}\le\epsilon_0^{2/3}.
\]
Thus, from the definition and Lemma \ref{Algebra} we have for all $p\in(-2\alpha-1,2\alpha+1)$
\[
\|\phi_1(\omega,\theta)-\id\|_{\mathcal B(\ell_p^2)}=\|\me^{B_1}-\id\|_{\mathcal B(\ell_p^2)}\le C\me^{C\|B_1\|_{\alpha+,\beta}^{\mathcal D_1,\sigma_1}}\|B_1\|_{\alpha+,\beta}^{\mathcal D_1,\sigma_1}\le\epsilon_0^{2/3},\quad\forall~(\omega,\theta)\in\mathcal D_1\times\mathbb T_{\sigma_1}^n.
\]
Collecting equations \eqref{EstimateAProposition} and \eqref{EstimateRProposition} leads to that $\|\widetilde{ A}_0\|_{\alpha,\beta}^{\mathcal D_1}\le\epsilon_0$ and 
\begin{equation}\label{R0}
\|R_0\|_{\alpha,\beta}^{\mathcal D_1,\sigma_1}\le\frac{C\me^{-\frac{K_1}{2}(\sigma_0-\sigma_1)}}{(\sigma_0-\sigma_1)^n}\|P_0\|_{\alpha,\beta}^{\mathcal D_0,\sigma_0}\le C\sigma_0^{-n}\epsilon_0^2\le\frac12\epsilon_0^{4/3}.
\end{equation}
Besides, by Lemma \ref{Algebra} we have for $s\in[0,1]$
\[
\|[(1-s)(\widetilde{ A}_0+R_0)+sP_0,B_1]\|_{\alpha,\beta}^{\mathcal D_1,\sigma_1}\le C\|P_0\|_{\alpha,\beta}^{\mathcal D_0,\sigma_0}\cdot\|B_1\|_{\alpha+,\beta}^{\mathcal D_1,\sigma_1}\le C\epsilon_0^{5/3}.
\]
It follows that
\[
\left\|\int_0^1\me^{-sB_1}[(1-s)(\widetilde{ A}_0+R_0)+sP_0,B_1]\me^{sB_1}\,\md s\right\|_{\alpha,\beta}^{\mathcal D_1,\sigma_1}\le\frac12\epsilon_0^{4/3}.
\]
Collecting the last estimate and equation \eqref{R0} concludes $\|P_1\|_{\alpha,\beta}^{\mathcal D_1,\sigma_1}\le\epsilon_0^{4/3}=\epsilon_1$.\\
Now assume that we have verified Lemma \ref{KAMiterationLemma} up to step $m>1$, then we go from step $m$ to step $m+1$. Clearly, from the assumption we have
\[
\| A_m- A_0\|_{\alpha,\beta}^{\mathcal D_m}=\|\sum_{l=0}^{m-1}\widetilde{ A}_l\|_{\alpha,\beta}^{\mathcal D_m}\le\sum_{l=0}^{m-1}\|\widetilde  A_l\|_{\alpha,\beta}^{\mathcal D_{l+1}}\le\sum_{l=0}^{m-1}\epsilon_l\le2\epsilon_0,
\]
which allows us to apply Proposition \ref{HomoEqnSolution} to construct $B_{m+1},\widetilde{ A}_m,R_m$ and $\mathcal D_{m+1},\sigma_{m+1}$ such that for all $(\omega,\theta)\in\mathcal D_{m+1}\times\mathbb T^n_{\sigma_{m+1}}$
\[
[ A_m,B_{m+1}]-\mi\dot B_{m+1}=\widetilde{ A}_m-P_m+R_m.
\]
Due to \eqref{MeasureProposition} we have 
\[
\meas(\mathcal D_m\setminus\mathcal D_{m+1})\le C\kappa_{m+1}^{\iota_1}K_{m+1}^{\iota_2}\le C\epsilon_m^{\frac{\iota_1}{16}}\left((\sigma_m-\sigma_{m+1})^{-1}\ln\epsilon_m^{-1}\right)^{\iota_2}
\le\epsilon_m^{\frac{\iota_1}{17}}.
\]
In view of \eqref{EstimateBProposition} we have
\[
\|B_{m+1}\|_{\alpha+,\beta}^{\mathcal D_{m+1},\sigma_{m+1}}\le\frac{CK_{m+1}}{\kappa_{m+1}^4(\sigma_m-\sigma_{m+1})^n}\|P_m\|_{\alpha,\beta}^{\mathcal D_m,\sigma_m}\le\frac{C(\ln\epsilon_m^{-1})\epsilon_m^{3/4}}{(\sigma_m-\sigma_{m+1})^{n+1}}\le\epsilon_m^{2/3}.
\]
Thus, from the definition and Lemma \ref{Algebra} we have for all $p\in(-2\alpha-1,2\alpha+1)$
\[
\|\phi_{m+1}(\omega,\theta)-\id\|_{\mathcal B(\ell_p^2)}\le C\me^{C\|B_{m+1}\|_{\alpha+,\beta}^{\mathcal D_{m+1},\sigma_{m+1}}}\|B_{m+1}\|_{\alpha+,\beta}^{\mathcal D_{m+1},\sigma_{m+1}}\le\epsilon_m^{2/3}.
\]
Collecting equations \eqref{EstimateAProposition} and \eqref{EstimateRProposition} leads to that $\|\widetilde{ A}_m\|_{\alpha,\beta}^{\mathcal D_{m+1}}\le\epsilon_m$ and
\begin{equation}\label{Rm}
	\|R_m\|_{\alpha,\beta}^{\mathcal D_{m+1},\sigma_{m+1}}\le\frac{C\me^{-\frac{K_{m+1}}{2}(\sigma_m-\sigma_{m+1})}}{(\sigma_m-\sigma_{m+1})^n}\|P_m\|_{\alpha,\beta}^{\mathcal D_{m},\sigma_m}\le C\sigma_0^{-n}(m+1)^{2n}\epsilon_m^2\le\frac12\epsilon_m^{4/3}.
\end{equation}
Besides, by Lemma \ref{Algebra} we have for $s\in[0,1]$
\[
\|[(1-s)(\widetilde{ A}_m+R_m)+sP_m,B_{m+1}]\|_{\alpha,\beta}^{\mathcal D_{m+1},\sigma_{m+1}}\le C\|P_m\|_{\alpha,\beta}^{\mathcal D_m,\sigma_m}\cdot\|B_{m+1}\|_{\alpha+,\beta}^{\mathcal D_{m+1},\sigma_{m+1}}\le C\epsilon_m^{5/3}.
\]
It follows that
\[
\left\|\int_0^1\me^{-sB_{m+1}}[(1-s)(\widetilde{ A}_m+R_m)+sP_m,B_{m+1}]\me^{sB_{m+1}}\,\md s\right\|_{\alpha,\beta}^{\mathcal D_{m+1},\sigma_{m+1}}\le\frac12\epsilon_m^{4/3}.
\]
Collecting the last estimate and equation  \eqref{Rm} leads to  $\|P_{m+1}\|_{\alpha,\beta}^{\mathcal D_{m+1},\sigma_{m+1}}\le\epsilon_m^{4/3}=\epsilon_{m+1}$. By induction, we complete the proof. \qedhere
\end{proof}

\subsection{Proof of Theorem \ref{ReducibilityTheorem}}
Let $\mathcal D_\epsilon=\cap_{m\ge0}\mathcal D_m$. Owing \eqref{MeasureEstimateIteration} it is a Borel set satisfying
\begin{equation}\label{MeasureEstimateProof}
\meas(\mathcal D_0\setminus\mathcal D_\epsilon)\le\sum_{m\ge0}\epsilon_{m}^{\frac{\iota_1}{17}}\le2\epsilon_0^{\frac{\iota_1}{17}}.
\end{equation}
Clearly, $\sigma_\infty=\sigma_0-\sum_{m\ge1}(\sigma_{m-1}-\sigma_m)=\frac{\sigma}2$. Then we let $(\omega,\theta)\in{\mathcal D}_\epsilon\times\mathbb T^n_{\sigma/2}$ and $p\in(-2\alpha-1,2\alpha+1)$ in the following discussion. Also, $P_m\to0$ as $m\to\infty$ by \eqref{PEstimateIteration}. Moreover, writing $ A^\infty:=\diag(\lambda_i^\infty)_{i\ge1}$ with $\lambda_i^\infty=\lim\limits_{m\to\infty}\lambda_i^{(m)}$ for $i\ge1$, by \eqref{AEstimateIteration} one has 
\[
\| A^\infty- A_0\|_{\alpha,\beta}^{\mathcal D_\epsilon}\le\sum_{m\ge0}\epsilon_m\le2\epsilon_0,
\]
which follows that
\begin{equation}\label{DifferenceAProof}
|\lambda_i^\infty-\nu_i|\le2\epsilon_0,\quad\forall~i\ge1.
\end{equation}
At last, we only remain to estimate the coordinate transformation $\Phi_{\omega,\epsilon}(\theta)$. First we present two auxiliary lemmas below.
\begin{Lemma}\label{TansformationLemma}
Denote $\Phi_m=\phi_1\circ\phi_2\circ\cdots\circ\phi_m$ for $m\ge1$, then we have
\[
\|\Phi_m(\omega,\theta)-\id\|_{\mathcal B(\ell_p^2)}\le\sum_{l=0}^{m-1}2\epsilon_l^{2/3}.
\]
\end{Lemma}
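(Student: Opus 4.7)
The plan is to proceed by induction on $m$, using the telescoping identity
\[
\Phi_{m+1}-\id = (\Phi_m-\id)\circ\phi_{m+1} + (\phi_{m+1}-\id),
\]
together with the one-step estimate \eqref{TransformationEstimate} supplied by Lemma \ref{KAMiterationLemma}. Since $\mathcal D_{m+1}\subset\mathcal D_m$ and $\sigma_{m+1}<\sigma_m$ at every stage, the compositions make sense on $\mathcal D_\epsilon\times\mathbb T^n_{\sigma/2}$.

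For the base case $m=1$, we have $\Phi_1=\phi_1$ so \eqref{TransformationEstimate} gives directly $\|\Phi_1-\id\|_{\mathcal B(\ell_p^2)}\le\epsilon_0^{2/3}\le 2\epsilon_0^{2/3}$. For the inductive step, assuming the stated bound at level $m$, I would estimate $\|\phi_{m+1}\|_{\mathcal B(\ell_p^2)}\le 1+\|\phi_{m+1}-\id\|_{\mathcal B(\ell_p^2)}\le 1+\epsilon_m^{2/3}$ from \eqref{TransformationEstimate}, then apply the triangle inequality in the identity above:
\[
\|\Phi_{m+1}-\id\|_{\mathcal B(\ell_p^2)}\le (1+\epsilon_m^{2/3})\sum_{l=0}^{m-1}2\epsilon_l^{2/3}+\epsilon_m^{2/3}.
\]

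To close the induction at the target $\sum_{l=0}^{m}2\epsilon_l^{2/3}$, I would need
\[
\epsilon_m^{2/3}\sum_{l=0}^{m-1}2\epsilon_l^{2/3}+\epsilon_m^{2/3}\le 2\epsilon_m^{2/3},
\]
i.e.\ $\sum_{l=0}^{m-1}2\epsilon_l^{2/3}\le 1$. Because the KAM scheme enforces $\epsilon_m=\epsilon_{m-1}^{4/3}=\epsilon_0^{(4/3)^m}$, the series $\sum_{l\ge 0}2\epsilon_l^{2/3}$ is a super-geometrically convergent tail and is bounded by, say, $3\epsilon_0^{2/3}$ once $\epsilon_0$ is sufficiently small. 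Thus by taking $\epsilon_*$ in Lemma \ref{KAMiterationLemma} a bit smaller if needed, this bookkeeping inequality holds for all $m\ge 1$, and the induction propagates.

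I do not expect any real obstacle here: the only thing one has to be attentive to is the multiplicative factor $(1+\epsilon_m^{2/3})$ generated by composing with $\phi_{m+1}$, and the super-exponential decay of $\epsilon_m$ absorbs it. The argument is purely algebraic in $\mathcal B(\ell_p^2)$ and uses no further property of the $\phi_m$'s than the operator norm bound \eqref{TransformationEstimate}.
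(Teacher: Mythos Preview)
Your argument is correct and is essentially the paper's proof: both proceed by induction using \eqref{TransformationEstimate}, with the only cosmetic difference that the paper telescopes as $\Phi_{m+1}-\id=\Phi_m\circ(\phi_{m+1}-\id)+(\Phi_m-\id)$ and bounds $\|\Phi_m\|$ via the inductive hypothesis, whereas you telescope as $(\Phi_m-\id)\circ\phi_{m+1}+(\phi_{m+1}-\id)$ and bound $\|\phi_{m+1}\|$. Either way the induction closes under the same smallness condition $\sum_{l\ge0}2\epsilon_l^{2/3}\le1$, which the paper leaves implicit and you make explicit.
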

\begin{proof}
We proceed by induction applying equation \eqref{TransformationEstimate} in Lemma \ref{KAMiterationLemma}. Clearly, \eqref{TransformationEstimate} implies that $\|\Phi_1(\omega,\theta)-\id\|_{\mathcal B(\ell_p^2)}\le\epsilon_0^{2/3}\le2\epsilon_0^{2/3}$. Now we assume that Lemma \ref{TansformationLemma} has been verified up to step $m>1$, then we go from step $m$ to $m+1$. Since
\[
\Phi_{m+1}-\id=\Phi_m\circ\phi_{m+1}-\id=\Phi_m\circ(\phi_{m+1}-\id)+\Phi_m-\id,
\]
then by above assumption and equation \eqref{TransformationEstimate} we have
\[
\|\Phi_{m+1}-\id\|_{\mathcal B(\ell_p^2)}\le\|\Phi_{m}\|_{\mathcal B(\ell_p^2)}\cdot\|\phi_{m+1}-\id\|_{\mathcal B(\ell_p^2)}+\|\Phi_m-\id\|_{\mathcal B(\ell_p^2)}\le
\sum_{l=0}^{m}2\epsilon_l^{2/3}.
\]
By induction, we complete the proof.\qedhere
\end{proof}
\begin{Lemma}
The coordinate transformation $\left(\Phi_m(\omega,\theta)\right)_{m\ge1}$ is a Cauchy sequence in $\mathcal B(\ell_p^2)$. Letting $\Phi_{\omega,\epsilon}(\theta)\in\mathcal B(\ell_p^2)$ be the limit mapping we have 
\begin{equation}\label{TransformationEstimateProof}
\|\Phi_{\omega,\epsilon}(\theta)-\id\|_{\mathcal B(\ell_p^2)}\le 4\epsilon_0^{2/3},
\end{equation}
which follows that $\Phi_{\omega,\epsilon}(\theta)$ is analytic in $\theta\in\mathbb T^n_{\sigma/2}$.
\end{Lemma}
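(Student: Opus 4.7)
The plan is to bootstrap from Lemma \ref{TansformationLemma} and the super-exponential decay $\epsilon_m=\epsilon_{m-1}^{4/3}=\epsilon_0^{(4/3)^m}$ to establish Cauchy-ness, then pass to the limit.

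First, I would obtain a uniform operator bound on $\Phi_m$. Since $\sum_{l\ge0}\epsilon_l^{2/3}=\sum_{l\ge0}\epsilon_0^{(2/3)(4/3)^l}$ converges and is bounded by $2\epsilon_0^{2/3}\le\tfrac12$ for $\epsilon_0\le\epsilon_*$ sufficiently small, Lemma \ref{TansformationLemma} gives $\|\Phi_m(\omega,\theta)\|_{\mathcal B(\ell_p^2)}\le 1+\sum_{l\ge0}2\epsilon_l^{2/3}\le 2$, uniformly in $m$ and in $(\omega,\theta)\in\mathcal D_\epsilon\times\mathbb T^n_{\sigma/2}$.

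Next, to prove Cauchy-ness I would write, for $k\ge1$,
\[
\Phi_{m+k}-\Phi_m=\Phi_m\circ\bigl(\phi_{m+1}\circ\cdots\circ\phi_{m+k}-\id\bigr),
\]
and apply the same induction as in Lemma \ref{TansformationLemma} to the tail composition $\phi_{m+1}\circ\cdots\circ\phi_{m+k}$. This yields
\[
\|\phi_{m+1}\circ\cdots\circ\phi_{m+k}-\id\|_{\mathcal B(\ell_p^2)}\le\sum_{l=m}^{m+k-1}2\epsilon_l^{2/3},
\]
so that
\[
\|\Phi_{m+k}-\Phi_m\|_{\mathcal B(\ell_p^2)}\le\|\Phi_m\|_{\mathcal B(\ell_p^2)}\sum_{l=m}^{m+k-1}2\epsilon_l^{2/3}\le 4\sum_{l\ge m}\epsilon_l^{2/3}.
\]
The rapid decay $\epsilon_l^{2/3}=\epsilon_0^{(2/3)(4/3)^l}$ makes the right-hand side tend to $0$ as $m\to\infty$, establishing the Cauchy property. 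Since $\mathcal B(\ell_p^2)$ is complete, the limit $\Phi_{\omega,\epsilon}(\theta):=\lim_{m\to\infty}\Phi_m(\omega,\theta)\in\mathcal B(\ell_p^2)$ exists.

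Taking $m=0$, $\Phi_0=\id$, and letting $k\to\infty$ in the previous display (and equivalently passing to the limit in Lemma \ref{TansformationLemma}) yields
\[
\|\Phi_{\omega,\epsilon}(\theta)-\id\|_{\mathcal B(\ell_p^2)}\le\sum_{l\ge0}2\epsilon_l^{2/3}\le 4\epsilon_0^{2/3},
\]
where the last inequality uses the geometric-style bound $\sum_{l\ge0}\epsilon_l^{2/3}\le 2\epsilon_0^{2/3}$ ensured by $\epsilon_*\ll 1$, giving \eqref{TransformationEstimateProof}.

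Finally, for analyticity in $\theta$, I would note that each $B_m$ is real analytic on $\mathbb T^n_{\sigma_m}$ by Proposition \ref{HomoEqnSolution}, so each $\phi_m(\omega,\cdot)=\me^{B_m(\omega,\cdot)}$ is analytic there, and since $\sigma_m\ge\sigma/2$ for all $m$, the finite composition $\Phi_m(\omega,\cdot)$ is analytic on $\mathbb T^n_{\sigma/2}$ with values in $\mathcal B(\ell_p^2)$. The Cauchy estimates above are uniform in $\theta\in\mathbb T^n_{\sigma/2}$, so $\Phi_m(\omega,\cdot)\to\Phi_{\omega,\epsilon}(\cdot)$ uniformly on $\mathbb T^n_{\sigma/2}$, and the uniform limit of $\mathcal B(\ell_p^2)$-valued analytic functions is analytic. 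The only real subtlety is making sure the telescoping argument uses the correct composition order (which is why the factor $\|\Phi_m\|$ appears in front), but once the uniform bound $\|\Phi_m\|\le 2$ is in hand, everything reduces to summing the rapidly decreasing tail $\sum_{l\ge m}\epsilon_l^{2/3}$.
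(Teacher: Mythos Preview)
Your proposal is correct and follows essentially the same approach as the paper. The only organizational difference is that the paper telescopes $\Phi_{m_2}-\Phi_{m_1}$ as a sum of one-step differences $\Phi_{l+1}-\Phi_l=\Phi_l\circ(\phi_{l+1}-\id)$ and bounds each by $2\epsilon_l^{2/3}$, whereas you factor out the full tail composition $\phi_{m+1}\circ\cdots\circ\phi_{m+k}-\id$ and re-run the induction of Lemma~\ref{TansformationLemma} on it; both routes use the uniform bound $\|\Phi_m\|\le 2$ from Lemma~\ref{TansformationLemma} and the super-exponential decay of $\epsilon_m$, and the final estimate~\eqref{TransformationEstimateProof} is obtained in both cases by passing to the limit in Lemma~\ref{TansformationLemma}.
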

\begin{proof}
Since $\Phi_{m+1}-\Phi_m=\Phi_m\circ\phi_{m+1}-\Phi_m=\Phi_m\circ(\phi_{m+1}-\id)$, then by Lemma \ref{TansformationLemma} and equation \eqref{TransformationEstimate} we have
\[
\|\Phi_{m+1}-\Phi_m\|_{\mathcal B(\ell_p^2)}\le\|\Phi_m\|_{\mathcal B(\ell_p^2)}\cdot\|\phi_{m+1}-\id\|_{\mathcal B(\ell_p^2)}\le2\epsilon_m^{2/3}.
\]
Given $m_2>m_1\ge1$, we have
\[
\|\Phi_{m_2}-\Phi_{m_1}\|_{\mathcal B(\ell_p^2)}\le\sum_{l=m_1}^{m_2-1}\|\Phi_{l+1}-\Phi_l\|_{\mathcal B(\ell_p^2)}\le\sum_{l=m_1}^{m_2-1}2\epsilon_l^{2/3}\le4\epsilon_{m_1}^{2/3}\to0\text{ as }m_1\to\infty.
\]
It follows the Cauchy sequence. Then we invoke Lemma \ref{TansformationLemma} to complete the proof.\qedhere
\end{proof}
By now we have obtained all the estimates \eqref{ReducibilityTheoremEstimate} of the reducibility theorem, which follows from equations \eqref{MeasureEstimateProof}, \eqref{DifferenceAProof} and \eqref{TransformationEstimateProof}. \\
\indent In the end we will prove the system \eqref{OriginalSystem} reduces in $\ell_1^2$ or $\ell_0^2$  to an autonomous system \eqref{ReducedSystem}.  
By Lemma \ref{Algebra} (iii),(iv) we have $\mathcal M_{\alpha}\subset\mathcal B(\ell_1^2,\ell_{-1}^2)$ and $\mathcal M_{\alpha+}\subset\mathcal B(\ell_s^2),~\forall~s\in[-1,1]$, provided $0<\alpha\le1/2$.  Besides, the coordinate transformation $u=\Phi_mv$ conjugates the original system $\mi\dot u=( A+\epsilon P)u,u\in\ell_1^2$ into $\mi\dot v=( A_m+P_m)v,v\in\ell_1^2$. These follow the identity in $\mathcal B(\ell_1^2,\ell_{-1}^2)$ below
\[
 A_m+P_m=\Phi_m^{-1}( A+\epsilon P)\Phi_m-\mi\Phi_m^{-1}\partial_t\Phi_m.
\]
By construction   $\Phi_{\omega,\epsilon}=\lim_{m\to\infty}\me^{B_1}\circ\me^{B_2}\circ\cdots\circ\me^{B_m}$. Letting $m\to\infty$, one obtains the reducibility identity in $\mathcal B(\ell_1^2,\ell_{-1}^2)$
\begin{equation}\label{ReducibilityIdentity}
 A^\infty(\omega)=\Phi_{\omega,\epsilon}^{-1}\big( A+\epsilon P(\omega t)\big)\Phi_{\omega,\epsilon}-\mi\Phi_{\omega,\epsilon}^{-1}\partial_t\Phi_{\omega,\epsilon},\quad\omega\in\mathcal D_\epsilon.
\end{equation}
Consequently,  if $v(t)\in\mathcal C^0(\mathbb R,\ell_1^2)\bigcap\mathcal C^1(\mathbb R,\ell_{-1}^2)$ satisfies \eqref{ReducedSystem} and define $u=\Phi_{\omega,\epsilon}v$, then by the reducibility identity \eqref{ReducibilityIdentity} one gets
\[
\mi\dot u=\Phi_{\omega,\epsilon}\mi\dot v+\mi\left(\partial_t\Phi_{\omega,\epsilon}\right)v
=\left(\Phi_{\omega,\epsilon} A^\infty\Phi_{\omega,\epsilon}^{-1}+\mi\left(\partial_t\Phi_{\omega,\epsilon}\right)\Phi_{\omega,\epsilon}^{-1}\right)u=\big(A+\epsilon P(\omega t)\big)u.
\]
In addition, from $v(t)\in\mathcal C^0(\mathbb R,\ell_1^2)\bigcap\mathcal C^1(\mathbb R,\ell_{-1}^2)$ we can draw $u(t)\in\mathcal C^0(\mathbb R,\ell_1^2)\bigcap\mathcal C^1(\mathbb R,\ell_{-1}^2)$. Conversely, if $u(t)\in\mathcal C^0(\mathbb R,\ell_1^2)\bigcap\mathcal C^1(\mathbb R,\ell_{-1}^2)$ satisfies \eqref{OriginalSystem} and define $v=\Phi_{\omega,\epsilon}^{-1}u$, then one has $v(t)\in\mathcal C^0(\mathbb R,\ell_1^2)\bigcap\mathcal C^1(\mathbb R,\ell_{-1}^2)$ fulfills \eqref{ReducedSystem} by the identity \eqref{ReducibilityIdentity}. This explains the reducibility in $\ell_1^2$ when $0<\alpha\leq 1/2$. 
\par
On the other hand, if $\alpha>1/2$,  by Lemma \ref{Algebra} (iii),(iv) we have $\mathcal M_\alpha\subset\mathcal B(\ell_0^2,\ell_{-2}^2)$ and $\mathcal M_{\alpha+}\subset\mathcal B(\ell_s^2),~\forall~s\in[-2,2]$.  
It implies the system \eqref{OriginalSystem} reduces in $\ell_0^2$ in the sense that if $v(t)\in\mathcal C^0(\mathbb R,\ell_0^2)\bigcap\mathcal C^1(\mathbb R,\ell_{-2}^2)$ fulfills \eqref{ReducedSystem} if and only if  $u=\Phi_{\omega,\epsilon}v\in\mathcal C^0(\mathbb R,\ell_0^2)\bigcap\mathcal C^1(\mathbb R,\ell_{-2}^2)$ fulfills \eqref{OriginalSystem} because the above identity \eqref{ReducibilityIdentity} holds in $\mathcal B(\ell_0^2,\ell_{-2}^2)$.  This completes the proof of Theorem \ref{ReducibilityTheorem}. \qed

\appendix
\section{Proof of Lemma \ref{Algebra}}\label{ProofAppendix}
Recall that $\alpha\ge\beta>0$, which will be used occasionally.\par 
(i). Since $A,B\in\mathcal M_{\alpha+,\beta}$ then by $\sum_{l\ge1}\le\left(\sum_{1+|i-l|\ge\frac{1+|i-j|}2}+\sum_{1+|j-l|\ge\frac{1+|i-j|}2}\right)$ one has
\[
(1+|i-j|)^{\alpha+1}|(AB)_i^j|\le C|A|_{\alpha+}|B|_{\alpha+}.
\]
In addition, from the definition we have
\begin{equation}\label{DeltaABij}
\Delta(AB)_i^j=\sum_{l\ge1}A_{i+1}^lB_l^{j+1}-\sum_{l\ge1}A_i^lB_l^j
=A_{i+1}^1B_1^{j+1}+\sum_{l\ge1}\Delta A_i^l\cdot B_{l+1}^{j+1}+\sum_{l\ge1}A_i^l\cdot\Delta B_l^j.
\end{equation}
We estimate it term by term. First see that
\[
|A_{i+1}^1B_1^{j+1}|\le\frac{|A|_{\alpha+}|B|_{\alpha+}}{(1+i)^{\alpha+1}(1+j)^{\alpha+1}}\le\frac{|A|_{\alpha+}|B|_{\alpha+}}{ij(ij)^{\alpha}}\le\frac{|A|_{\alpha+}|B|_{\alpha+}}{(1+|i-j|)(ij)^\beta}.
\]
Then turn to the remainder terms. We have that
\[
|\sum_{l\ge1}\Delta A_i^l\cdot B_{l+1}^{j+1}|,~
|\sum_{l\ge1}A_i^l\cdot\Delta B_l^j|\le\frac{C|A|_{\alpha+,\beta}|B|_{\alpha+,\beta}}{(1+|i-j|)(ij)^\beta},
\]
which follow from
\begin{equation}\label{iSeriesJbeta}
\sum_{l\ge1}\frac1{(1+|j-l|)^{\alpha+1}l^\beta}\le\frac C{j^\beta},\quad\sum_{l\ge1}\frac1{(1+|i-l|)(1+|j-l|)^\alpha l^\beta}\le\frac C{j^\beta}.
\end{equation}
Therefore, we obtain $|AB|_{\alpha+,\beta}\le C|A|_{\alpha+,\beta}|B|_{\alpha+,\beta}$. 
\par
(ii). Recall that $A\in\mathcal M_{\alpha,\beta},B\in\mathcal M_{\alpha+,\beta}$. Clearly, we have
\[
(1+|i-j|)^\alpha|(AB)_i^j|\le C|A|_\alpha|B|_{\alpha+}.
\]
We remain to estimate $\Delta(AB)_i^j$. Recall the equation \eqref{DeltaABij} and
first observe that 
\[
|A_{i+1}^1B_1^{j+1}|\le\frac{|A|_\alpha|B|_{\alpha+}}{(1+i)^\alpha(1+j)^{\alpha+1}}\le\frac{|A|_\alpha|B|_{\alpha+}}{(ij)^\beta}.
\]
Also, by the equation \eqref{iSeriesJbeta} one has
\[
|\sum_{l\ge1}\Delta A_i^l\cdot B_{l+1}^{j+1}|,~
|\sum_{l\ge1}A_i^l\cdot\Delta B_l^j|\le\frac{C|A|_{\alpha,\beta}|B|_{\alpha+,\beta}}{(ij)^\beta}.
\]
Hence, we have $|AB|_{\alpha,\beta}\le C|A|_{\alpha,\beta}|B|_{\alpha+,\beta}$.
Clearly, repeating the same procedures we likewise obtain that
\(
|BA|_{\alpha,\beta}\le C|A|_{\alpha,\beta}|B|_{\alpha+,\beta}.
\)
\par 
(iii). First let $s\in(0,2\alpha+1)$ and claim that there is a $C>0$ such that
\begin{equation}\label{iiiClaim}
\sum_{j\ge1}\left(\frac ij\right)^{\frac sp}\frac1{(1+|i-j|)^{\alpha+1}}\le C,\quad
\sum_{i\ge1}\left(\frac ij\right)^{\frac sq}\frac1{(1+|i-j|)^{\alpha+1}}\le C,
\end{equation}
for some $p,q>0$ with $\tfrac1p+\tfrac1q=1$ and 
\begin{equation}\label{iiiConjugatePair}
\frac{s-\alpha}{s}<\frac1p\le\frac{\alpha+1}{s}.
\end{equation}
Taking into account $A\in\mathcal M_{\alpha+}$, we calculate for $\xi\in\ell_s^2$ that
\begin{align*}
	&\|A\xi\|_s^2\le\sum_{i\ge1}i^s\left(\sum_{j\ge1}|A_i^j||\xi_j|\right)^2\le|A|_{\alpha+}^2\sum_{i\ge1}\left(\sum_{j\ge1}\frac{(\tfrac ij)^{\frac s2(\frac1p+\frac1q)}j^{\frac s2}|\xi_j|}{(1+|i-j|)^{\alpha+1}}\right)^2\\
	\le&|A|_{\alpha+}^2\sum_{i\ge1}\left(\sum_{j\ge1}\left(\frac ij\right)^{\frac sp}\frac1{(1+|i-j|)^{\alpha+1}}\right)\left(\sum_{j\ge1}\left(\frac ij\right)^{\frac sq}
	\frac{j^s|\xi_j|^2}{(1+|i-j|)^{\alpha+1}}\right)\\
	\le&C|A|_{\alpha+}^2\sum_{i\ge1}\sum_{j\ge1}\left(\frac ij\right)^{\frac sq}\frac{j^s|\xi_j|^2}{(1+|i-j|)^{\alpha+1}}
	=C|A|_{\alpha+}^2\sum_{j\ge1}j^s|\xi_j|^2\sum_{i\ge1}\frac{(\frac ij)^{\frac sq}}{(1+|i-j|)^{\alpha+1}}\\
	\le&C^2|A|_{\alpha+}^2\sum_{j\ge1}j^s|\xi_j|^2=C^2|A|_{\alpha+}^2\|\xi\|_s^2.
\end{align*}
It follows that $\|A\|_{\mathcal B(\ell_s^2)}\le C|A|_{\alpha+}$ for any $s\in(0,2\alpha+1)$. Clearly, \eqref{iiiClaim} holds true when $s=0$, which leads to that $\|A\|_{\mathcal B(\ell_0^2)}\le C|A|_{\alpha+}$. In addition, by \eqref{iiiClaim} we have 
\[
\sum_{i\ge1}\left(\frac ij\right)^{-\frac sp}\frac1{(1+|i-j|)^{\alpha+1}}\le C,~
\sum_{j\ge1}\left(\frac ij\right)^{-\frac sq}\frac1{(1+|i-j|)^{\alpha+1}}\le C,\quad\forall~s\in(0,2\alpha+1),
\]
provided \eqref{iiiConjugatePair} holds, which follows that $\|A\|_{\mathcal B(\ell_{-s}^2)}\le C|A|_{\alpha+}$ for any $s\in(0,2\alpha+1)$.\par
(iv). Recall that $A\in\mathcal M_\alpha$. We will distinguish three cases depending on the size of $\alpha$.\par
(a). The case $\alpha\in(0,\tfrac12]$. We compute for  $\xi\in\ell_1^2$ that
\begin{align*}
&\|A\xi\|_{-1}^2\le\sum_{i\ge1}i^{-1}\left(\sum_{j\ge1}|A_i^j||\xi_j|\right)^2\le|A|_\alpha^2\sum_{i\ge1}i^{-1}\left(\sum_{j\ge1}\frac{j^\frac12|\xi_j|}{j^{\frac12}(1+|i-j|)^\alpha}\right)^2\\
\le&|A|_\alpha^2\sum_{i\ge1}i^{-1}\left(\sum_{j\ge1}\frac1{j(1+|i-j|)^\alpha}\right)\left(\sum_{j\ge1}\frac{j|\xi_j|^2}{(1+|i-j|)^\alpha}\right)\le C|A|_\alpha^2\sum_{i\ge1}\sum_{j\ge1}\frac{j|\xi_j|^2}{i(1+|i-j|)^\alpha}\\
=&C|A|_\alpha^2\sum_{j\ge1}j|\xi_j|^2\sum_{i\ge1}\frac1{i(1+|i-j|)^\alpha}\le C^2|A|_\alpha^2\|\xi\|_1^2.
\end{align*}\par 
(b). The case $\alpha\in(\tfrac12,1]$. Let $s<2\alpha-2$ and there exists a $C>0$ such that
\[
\sum_{j\ge1}\frac1{(1+|i-j|)^{\frac{2\alpha}{p}}}\le C,\quad\sum_{i\ge1}\frac1{i^{-s}(1+|i-j|)^{\frac{2\alpha}{q}}}\le C,
\]
for some $p,q>0$ with $\tfrac1p+\tfrac1q=1$ and $\tfrac{1}{2\alpha}<\tfrac1p<\tfrac{2\alpha-s-1}{2\alpha}$. 
We calculate for any $\xi\in\ell_0^2$ that
\begin{align*}
&\|A\xi\|_s^2\le\sum_{i\ge1}i^s\left(\sum_{j\ge1}|A_i^j||\xi_j|\right)^2\le|A|_\alpha^2\sum_{i\ge1}i^s\left(\sum_{j\ge1}\frac{|\xi_j|}{(1+|i-j|)^{\frac\alpha p+\frac\alpha q}}\right)^2\\
\le&|A|_\alpha^2\sum_{i\ge1}i^s\left(\sum_{j\ge1}\frac1{(1+|i-j|)^{\frac{2\alpha}{p}}}\right)\left(\sum_{j\ge1}\frac{|\xi_j|^2}{(1+|i-j|)^{\frac{2\alpha}{q}}}\right)
\le C|A|_\alpha^2\sum_{i\ge1}\sum_{j\ge1}\frac{i^s|\xi_j|^2}{(1+|i-j|)^{\frac{2\alpha}{q}}}\\
=&C|A|_\alpha^2\sum_{j\ge1}|\xi_j|^2\sum_{i\ge1}\frac1{i^{-s}(1+|i-j|)^{\frac{2\alpha}{q}}}\le C^2|A|_\alpha^2\|\xi\|_0^2,\qquad\underline{~\text{by}~\tfrac{2\alpha}{q}-s>1~}.
\end{align*}\par 
(c). The case $\alpha\in(1,\infty)$. We compute for $\xi\in\ell_0^2$ that
\begin{align*}
&\|A\xi\|_0^2\le\sum_{i\ge1}\left(\sum_{j\ge1}|A_i^j||\xi_j|\right)^2\le|A|_\alpha^2\sum_{i\ge1}\left(\sum_{j\ge1}\frac{|\xi_j|}{(1+|i-j|)^\alpha}\right)^2\\
\le&|A|_\alpha^2\sum_{i\ge1}\left(\sum_{j\ge1}\frac1{(1+|i-j|)^\alpha}\right)\left(\sum_{j\ge1}\frac{|\xi_j|^2}{(1+|i-j|)^\alpha}\right)\le C|A|_\alpha^2\sum_{i\ge1}\sum_{j\ge1}\frac{|\xi_j|^2}{(1+|i-j|)^{\alpha}}\\
=&C|A|_\alpha^2\sum_{j\ge1}|\xi_j|^2\sum_{i\ge1}\frac1{(1+|i-j|)^\alpha}\le C^2|A|_\alpha^2\|\xi\|_0^2.
\end{align*}
Consequently, collecting all the estimates of above three cases leads to the desired results.

\section{Some technical lemmas}\label{LemmaAppendix}
\begin{Lemma}\label{KeyLemma}
	Let $Q\in\mathcal M_{\alpha,\beta}$ and 
	$(\mu_i)_{i\ge1}$ is a sequence of real numbers satisfying 
	\begin{equation}\label{DifferenceLambdaMu}
		|\mu_{i+1}-\mu_i+\nu_i-\nu_{i+1}|\le\frac{c_\mu}{i^{2\beta}}
	\end{equation}
	for a given $c_\mu>0$, where  $0<\beta\le\min\{\alpha,\delta\}$, $\delta$ and $\nu_i$ are shown in Hypothesis H1. 
	If $B(k) =\left(B_i^j(k) \right)_{i,j\ge1}:=\left(\frac{Q_i^j}{k\cdot\omega+\mu_i-\mu_j}\right)_{i,j\ge1}$ defined for $k\in\mathbb Z^n$ with $|k|\le K$, $\omega\in\mathbb R^n$ and the sequence $(\mu_i)_{i\ge1}$ satisfies  
	\begin{equation}\label{SmallDivisorMu}
		|k\cdot\omega+\mu_i-\mu_j|\ge\kappa(1+|i-j|),\quad\forall~i,j\ge1 \text{ and } |k|\le K, 
	\end{equation}
	then for $|k|\le K$, $B(k)\in \mathcal M_{\alpha+,\beta}$. More clearly,  there is a constant $C\equiv2^{\beta+1}(c_\mu+c_1+1)$  such that for $i,j\geq 1$ and $|k|\le K$, 
	\[
	(1+|i-j|)^{\alpha+1}|B_i^j(k)|+(1+|i-j|)(ij)^\beta|\Delta B_i^j(k)|\le\frac{C|Q|_{\alpha,\beta}}{\kappa^2}.
	\]
\end{Lemma}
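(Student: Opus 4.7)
The plan is to split the claim into two parts. The first inequality $(1+|i-j|)^{\alpha+1}|B_i^j(k)| \leq |Q|_{\alpha,\beta}/\kappa$ is immediate from \eqref{SmallDivisorMu} together with $|Q_i^j|\le |Q|_{\alpha,\beta}(1+|i-j|)^{-\alpha}$, and since $\kappa \leq 1$ it already lies within the target bound $C|Q|_{\alpha,\beta}/\kappa^2$.

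For the difference estimate, set $D_{i,j} := k\cdot\omega+\mu_i-\mu_j$, and use the algebraic identity
\[
\Delta B_i^j(k) \;=\; \frac{\Delta Q_i^j}{D_{i+1,j+1}} \;-\; \frac{Q_i^j\,\Delta D_{i,j}}{D_{i,j}\,D_{i+1,j+1}}, \qquad \Delta D_{i,j}:=(\mu_{i+1}-\mu_i)-(\mu_{j+1}-\mu_j).
\]
The first term is straightforward: since $|(i+1)-(j+1)|=|i-j|$, the small-divisor bound gives $|D_{i+1,j+1}|\ge \kappa(1+|i-j|)$, while $|\Delta Q_i^j|\le |Q|_{\alpha,\beta}(ij)^{-\beta}$ by definition of $\mathcal M_{\alpha,\beta}$; multiplying by $(1+|i-j|)(ij)^\beta$ produces the bound $|Q|_{\alpha,\beta}/\kappa$.

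The heart of the argument is the second term, where the hypothesis on the near-linearity of $(\mu_i)$ enters. Writing $\mu_{m+1}-\mu_m=(\nu_{m+1}-\nu_m)+E_m$ with $|E_m|\le c_\mu m^{-2\beta}$ from \eqref{DifferenceLambdaMu}, and invoking the H1 cancellation $|\nu_{i+1}-\nu_i+\nu_j-\nu_{j+1}|\le c_1|i-j|(ij)^{-\delta}$, the triangle inequality gives
\[
|\Delta D_{i,j}| \;\le\; \frac{c_1|i-j|}{(ij)^\delta} + \frac{c_\mu}{i^{2\beta}} + \frac{c_\mu}{j^{2\beta}}.
\]
Combining this with $|Q_i^j|\le |Q|_{\alpha,\beta}(1+|i-j|)^{-\alpha}$ and $|D_{i,j}D_{i+1,j+1}|\ge \kappa^2(1+|i-j|)^2$, then multiplying by $(1+|i-j|)(ij)^\beta$, yields three subterms. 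The $c_1$-subterm reduces to $|i-j|(ij)^{\beta-\delta}(1+|i-j|)^{-\alpha-1}$, which is bounded by $1$ because $\beta\le\delta$ forces $(ij)^{\beta-\delta}\le 1$ and $|i-j|\le 1+|i-j|$.

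The two remaining symmetric subterms come down to bounding $j^\beta\big/\bigl(i^\beta(1+|i-j|)^{\alpha+1}\bigr)$ (and its $i\leftrightarrow j$ mirror) uniformly. This last estimate is the main technical obstacle, and I handle it by case analysis: if $j\le i$ the ratio is at most $1$; if $j>i$, the inequality $j^\beta\le 2^\beta\max(i^\beta,(j-i)^\beta)$ combined with $\alpha\ge\beta$ (so that $(j-i)^\beta\le(1+|i-j|)^{\alpha+1}$) yields an absolute bound of $2^\beta$. Collecting all constants produces the claimed inequality with $C=2^{\beta+1}(c_\mu+c_1+1)$; the factor $\kappa^{-2}$ arises precisely from the product of two small divisors $D_{i,j}D_{i+1,j+1}$ in the second term of the decomposition, and the structural hypothesis $\beta\le\min\{\alpha,\delta\}$ is exactly what makes all three weight exponents match up.
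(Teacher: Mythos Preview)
Your proof is correct and follows essentially the same route as the paper: the identical decomposition $\Delta B_i^j=\Delta Q_i^j/D_{i+1,j+1}-Q_i^j\,\Delta D_{i,j}/(D_{i,j}D_{i+1,j+1})$, the same three-term bound on $|\Delta D_{i,j}|$ via \eqref{DifferenceLambdaMu} and Hypothesis~H1, and the same use of $\beta\le\min\{\alpha,\delta\}$ to absorb the weights. The only cosmetic difference is in the case analysis for the $c_\mu$-subterms: the paper assumes without loss of generality $i\ge j$ and splits into $j\le i\le 2j$ versus $i>2j$, whereas you handle each symmetric term directly via $j^\beta\le 2^\beta\max(i^\beta,(j-i)^\beta)$; both arguments produce the same constant $2^{\beta+1}(c_\mu+c_1+1)$, and both tacitly use $\kappa\le 1$ (guaranteed in the application, Proposition~\ref{HomoEqnSolution}) to pass from the $1/\kappa$ terms to the stated $1/\kappa^2$ bound.
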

\begin{proof}
For simplicity we will omit the variable $k$ if there is no confusion. Since $Q\in\mathcal M_{\alpha,\beta}$ with $0<\beta\le\min\{\alpha,\delta\}$,  then from the definition and \eqref{SmallDivisorMu} we have 
\[
	(1+|i-j|)^{\alpha+1}|B_i^j|\le\frac{(1+|i-j|)^{\alpha+1}|Q_i^j|}{|k\cdot\omega+\mu_i-\mu_j|}\le\frac{(1+|i-j|)^\alpha|Q_i^j|}\kappa\le\frac{|Q|_\alpha}\kappa.
\]
Now we commence the estimations on $\Delta B_i^j$. First observe that
\[
	\Delta B_i^j=\frac{\Delta Q_i^j}{k\cdot\omega+\mu_{i+1}-\mu_{j+1}}+\frac{(\mu_i-\mu_j+\mu_{j+1}-\mu_{i+1})Q_i^j}{(k\cdot\omega+\mu_{i+1}-\mu_{j+1})(k\cdot\omega+\mu_i-\mu_j)}\\
	:=\Delta_1+\Delta_2.
\]
Clearly, by \eqref{SmallDivisorMu} we have 
\[
	(1+|i-j|)(ij)^\beta|\Delta_1|\le\frac{(1+|i-j|)(ij)^\beta|\Delta Q_i^j|}{|k\cdot\omega+\mu_{i+1}-\mu_{j+1}|}\le\frac{\sup_{i,j\ge1}(ij)^\beta|\Delta Q_i^j|}\kappa.
\]
Then turn to $\Delta_2$. We hereafter assume $i\ge j$ without loss of genelarity and will distinguish two cases depending upon whether $i\le 2j$ or not.\\ \indent
(a). The case $j\le i\le 2j$. We calculate that
\[
\Delta_2=\frac{\big((\mu_i-\mu_{i+1}+\nu_{i+1}-\nu_i)+(\mu_{j+1}-\mu_j+\nu_j-\nu_{j+1})+(\nu_{j+1}-\nu_j+\nu_i-\nu_{i+1})\big)Q_i^j}{(k\cdot\omega+\mu_{i+1}-\mu_{j+1})(k\cdot\omega+\mu_i-\mu_j)}.
\]
Consequently, by equations \eqref{DifferenceLambdaMu},\eqref{SmallDivisorMu} and Hypothesis H1 we have
\[
	|\Delta_2|\le\frac{\left(\tfrac{c_\mu}{i^{2\beta}}+\tfrac{c_\mu}{j^{2\beta}}+\tfrac{c_1|i-j|}{(ij)^\delta}\right)|Q_i^j|}{\kappa^2(1+|i-j|)^2}\le\frac{\left(c_\mu+2^\beta c_\mu+c_1|i-j|\right)|Q_i^j|}{\kappa^2(1+|i-j|)^2(ij)^\beta}\le
	\frac{2^{\beta+1}(c_\mu+c_1)|Q_i^j|}{\kappa^2(1+|i-j|)(ij)^\beta},
\]
which follows that $(1+|i-j|)(ij)^\beta|\Delta_2|\le\frac{2^{\beta+1}(c_\mu+c_1)}{\kappa^2}|Q|_\alpha$ for $j\le i\le 2j$.\\ \indent
(b). The case $i>2j$. In this case $1+|i-j|=1+i-j\ge\frac i2$. We likewise obtain that
\begin{align*}
	|\Delta_2|&\le\frac{\left(\tfrac{c_\mu}{i^{2\beta}}+\tfrac{c_\mu}{j^{2\beta}}+\tfrac{c_1|i-j|}{(ij)^\delta}\right)|Q_i^j|}{\kappa^2(1+|i-j|)^2}\le\frac{\left(2c_\mu+c_1|i-j|\right)|Q_i^j|}{\kappa^2(1+|i-j|)^2j^\beta}\le
	\frac{2(c_\mu+c_1)|Q_i^j|}{\kappa^2(1+|i-j|)j^\beta},
\end{align*}
which follows that
$(1+|i-j|)(ij)^\beta|\Delta_2|
	\le\frac{2(c_\mu+c_1)i^\beta|Q_i^j|}{\kappa^2}\le\frac{2^{\beta+1}(c_\mu+c_1)}{\kappa^2}|Q|_\alpha$ for $i>2j$.
Collecting all the above estimates concludes that
\[
(1+|i-j|)(ij)^\beta|\Delta_2|\le\frac{2^{\beta+1}(c_\mu+c_1)}{\kappa^2}|Q|_\alpha,\quad\forall~i,j\ge1.
\]
As a consequence, we obtain the desired results.
\end{proof}
\begin{Lemma}\label{AuxiliaryMeasure}
Let $f:[0,1]\mapsto\mathbb R$ be a $\mathcal C^1$ map satisfying $|f'(x)|\ge\varsigma>0$ for all $x\in[0,1]$ then for each $\kappa>0$ we have $\meas\left(\{x\in[0,1]:|f(x)|\le\kappa\}\right)\le\frac{2\kappa}{\varsigma}$.
\end{Lemma}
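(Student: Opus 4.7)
The plan is to reduce the measure estimate to the monotonicity of $f$ and a direct application of the mean value theorem. The hypothesis $|f'(x)|\ge\varsigma>0$ on $[0,1]$ combined with the continuity of $f'$ forces $f'$ to have constant sign throughout $[0,1]$; otherwise $f'$ would vanish somewhere by the intermediate value theorem, contradicting $|f'|\ge\varsigma$. Without loss of generality I would assume $f'(x)\ge\varsigma$, so that $f$ is strictly increasing and in particular injective on $[0,1]$.

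Next I would identify the sublevel set $E:=\{x\in[0,1]:|f(x)|\le\kappa\}=f^{-1}\bigl([-\kappa,\kappa]\bigr)$. By strict monotonicity and continuity of $f$, the preimage of an interval under $f$ is itself an interval (possibly empty), so $E=[a,b]$ with $0\le a\le b\le 1$. If $E=\emptyset$ the claim is trivial; otherwise, by definition, $-\kappa\le f(a)\le f(b)\le\kappa$, hence $f(b)-f(a)\le 2\kappa$.

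To conclude, I would apply the mean value theorem to $f$ on $[a,b]$ to obtain some $\xi\in(a,b)$ with $f(b)-f(a)=f'(\xi)(b-a)$. Since $f'(\xi)\ge\varsigma$, this yields
\[
\meas(E)=b-a=\frac{f(b)-f(a)}{f'(\xi)}\le\frac{2\kappa}{\varsigma},
\]
which is exactly the claimed bound. I do not expect any genuine obstacle here: the only subtle point is justifying that $f'$ has constant sign, which is an immediate consequence of continuity together with the uniform lower bound $|f'|\ge\varsigma>0$. The proof is essentially a two-line argument and no delicate measure-theoretic machinery beyond monotonicity and the mean value theorem is needed.
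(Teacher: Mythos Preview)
Your proof is correct and entirely standard. The paper itself states this lemma without proof, treating it as an elementary fact; your argument via constant sign of $f'$, monotonicity, and the mean value theorem is exactly the expected justification.
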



\begin{thebibliography}{99}

\bibitem{BaMon21}
Baldi, P., Montalto, R.: Quasi - periodic incompressible Euler flows in 3D. Adv. Math. \textbf{384}, 107730 (2021)

\bibitem{Bam2017}
Bambusi, D.:  Reducibility of 1-d Schr\"odinger equation with time quasiperiodic unbounded perturbations. II. Commun. Math. Phys. \textbf{353}, 353-378 (2017)

\bibitem{Bam2018}
Bambusi, D.:  Reducibility of 1-d Schr\"odinger equation with time quasiperiodic unbounded perturbations. I. Trans. Amer. Math. Soc. \textbf{370}, 1823-1865 (2018)

 \bibitem{BG01}
Bambusi, D., Graffi, S.: Time quasi-periodic unbounded perturbations of Schr\"odinger operators and KAM methods. Commun. Math. Phys. \textbf{219}, 465-480 (2001)

\bibitem{BGMR2018}
Bambusi, D., Gr\'{e}bert, B., Maspero, A., Robert, D.:  Reducibility of the quantum harmonic oscillator in $d$-dimensions with polynomial time-dependent perturbation. Anal. \& PDE \textbf{11}, 775-799 (2018)

\bibitem{BGMR2019}
Bambusi, D., Gr\'ebert, B., Maspero, A., Robert, D.: Growth of Sobolev norms for abstract linear Schr\"odinger equations.
J. Eur. Math. Soc. \textbf{23}, 557-583 (2021)

\bibitem{BLM18}
Bambusi, D., Langella, D., Montalto, R.: Reducibility of non-resonant transport equation on $\mathbb T^d$ with unbounded perturbations. Ann. Henri Poincar\'e.  {\bf 20}, 1893-1929 (2019)

\bibitem{BLM2021}
Bambusi, D., Langella, D., Montalto, R.: Growth of Sobolev norms for unbounded perturbations of the Laplacian on flat tori. arXiv:2012.02654 (2020)

\bibitem{BBP13}
Berti, M., Biasco, L., Procesi M.: KAM theory for the Hamiltonian derivative wave equation. Ann. Scient. \'Ec. Norm. Sup. \textbf{46}, 301-373 (2013)

\bibitem{BM2019}
Berti, M., Maspero, A.: Long time dynamics of Schr\"odinger and wave equations on flat tori.  J. Differ. Equ. \textbf{267}, 1167-1200 (2019). 

 \bibitem{ChYou00}
 Chierchia L., You J.: KAM tori for 1D nonlinear wave equations with periodic boundary conditions. Commun. Math. Phys. \textbf{211}, 497-525 (2000) 
 
 
\bibitem{Chodosh11} Chodosh, O.: Infinite matrix representations of isotropic pseudodifferential operators. Methods Appl. Anal. \textbf{18}, 351-372 (2011)
  
\bibitem{Com87}
Combescure, M.: The quantum stability problem for time-periodic perturbations of the harmonic oscillator. Ann. Inst. H. Poincar\'e Phys. Th\'eor. \textbf{47}(1), 63-83 (1987); Erratum: Ann. Inst. H. Poincar\'e Phys. Th\'eor. \textbf{47}(4), 451-454 (1987)

 \bibitem{CoMon18} 
Corsi, L., Montalto, R.: Quasi-periodic solutions for the forced Kirchhoff equation on $\mathbb{T}^d$. Nonlinearity \textbf{31}, 5075-5109 (2018)

\bibitem{Del2014}
Delort, J.-M.: Growth of Sobolev norms for solutions of time dependent Schr\"odinger operators with harmonic oscillator potential. Commun. PDE \textbf{39}, 1-33 (2014)

\bibitem{DLSV2002}
Duclos, P., Lev, O., \v{S}\v{\mbox{t}}ov\'{i}\v{c}ek, P., Vittot, M.: Weakly regular Floquet Hamiltonians with pure point spectrum. Rev. Math. Phys. \textbf{14}, 531-568 (2002).

\bibitem{Eli2011}
Eliasson,  L. H.:  Reducibility for linear quasi-periodic differential equations. Winter School, St Etienne de
Tin\'ee (2011)

\bibitem{EK2009} Eliasson, L. H., Kuksin, S. B.: On reducibility of Schr\"odinger equations with quasiperiodic in time potentials. Commun.  Math. Phys. \textbf{286}, 125-135 (2009) 

\bibitem{EV83}
Enss, V., Veseli\' c, K.: Bound states and propagating states for time-dependent hamiltonians. Ann. Inst. H. Poincar\'e Phys. Th\'eor. \textbf{39}, 159-191 (1983).

\bibitem{FaRa2020} Faou, E., Rapha\"el, P.: On weakly turbulent solutions to the perturbed linear harmonic oscillator.  arXiv: 2006.08206 (2020)

\bibitem{FGiMP19}
Feola, R., Giuliani, F., Montalto, R., Procesi, M.: Reducibility of first order linear operators on tori via Moser's theorem. J. Funct. Anal. \textbf{276}, 932-970 (2019)

\bibitem{FGr19}
Feola, R., Gr\'ebert, B.: Reducibility of Schr\"odinger equation on the sphere. Int. Math. Res. Not. \textbf{0}, 1-39 (2020)

\bibitem{FGN19}
Feola, R., Gr\'ebert, B., Nguyen, T.: Reducibility of Schr\"odinger equation on a Zoll manifold with unbounded potential. J. Math. Phys. \textbf{61}, 071501 (2020) 

\bibitem{GrYa2000}
Graffi, S., Yajima, K.: Absolute continuity of the Floquet spectrum for a nonlinearly forced harmonic oscillator. Commun. Math. Phys.  \textbf{215}, 245-250 (2000)


\bibitem{GP19}
 Gr\'{e}bert, B.,  Paturel, E.: On reducibility of quantum harmonic oscillator on $\mathbb R^d$ with quasiperiodic in time potential. 
 Ann. Fac. Sci. Toulouse Math. \textbf{28}, 977-1014 (2019) 

\bibitem{GT2011}
  Gr\'{e}bert, B., Thomann,  L.: KAM for the quantum harmonic oscillator. Commun. Math. Phys. \textbf{307}, 383-427 (2011)

\bibitem{Kuk1993}
Kuksin, S. B.: Nearly integrable infinite-dimensional Hamiltonian systems. Lecture Notes in Mathematics \textbf{1556}, Springer-Verlag, Berlin (1993)

 \bibitem{LiangLuo2021}
Liang, Z., Luo, J.: Reducibility of 1-d quantum harmonic oscillator equation with  unbounded oscillation perturbations.  J. Differ. Equ.  \textbf{270}, 343-389 (2021)

\bibitem{LiangW2019}
Liang, Z. , Wang, Z.: Reducibility of quantum harmonic oscillator on $\mathbb R^d$ with differential and quasi-periodic in time potential. J. Differ. Equ.  \textbf{267},  3355-3395 (2019)
 
\bibitem{LW2021}
Liang, Z., Wang, Z.-Q.: Reducibility of quantum harmonic oscillator on $\R^d$ perturbed by q quasi-periodic potential with logarithmic decay.  submitted (2021)

\bibitem{LZZ2020}
Liang, Z., Zhao, Z., Zhou, Q.: 1-d quasi-periodic quantum harmonic oscillator with quadratic time-dependent perturbations: Reducibility and growth of Sobolev norms. 
J. Math. Pures Appl.  \textbf{146}, 158-182 (2021)

\bibitem{LY10}
  Liu, J., Yuan, X.: Spectrum for quantum duffing oscillator and small-divisor equation with large-variable coefficient. Commun. Pure Appl. Math. \textbf{63}, 1145-1172 (2010)
  
\bibitem{Mas2018}
Maspero, A.: Lower bounds on the growth of Sobolev norms in some linear time dependent Schr\"odinger equations. Math. Res. Lett. \textbf{26}, 1197-1215 (2019)

\bibitem{MR2017}
Maspero, A., Robert, D.: On time dependent Schr\"odinger equations: Global well-posedness and growth of Sobolev norms. J. Funct. Anal. \textbf{273}, 721-781 (2017).
  
\bibitem{Mon19}
Montalto, R.: A reducibility result for a class of linear wave equations on $\mathbb T^d$. Int. Math. Res. Not. \textbf{2019}, 1788-1862 (2019).

\bibitem{PrXu13}
Procesi, M., Xu, X.: Quasi-T\"oplitz functions in KAM theorem. SIAM J. Math. Anal. \textbf{45}, 2148-2181 (2013)

\bibitem{Shubin1987}
Shubin, M. A.: Pseudodifferential operators and spectral theory, \textbf{2nd} edition. Springer-Verlag, Berlin (2001)

 \bibitem{Th2020}
 Thomann, L.: Growth of Sobolev norms for linear Schr\"odinger operators. To appear in Pure Appl. Anal.  arXiv: 2006.02674 (2020)

\bibitem{Wang08}
Wang, W.-M.: Pure point spectrum of the Floquet Hamiltonian for the quantum harmonic oscillator under time quasi-periodic perturbations. Commun. Math. Phys. \textbf{277},
459-496 (2008)

\bibitem{WL2017} Wang, Z., Liang, Z.: Reducibility of 1D quantum harmonic oscillator perturbed by a quasiperiodic potential with logarithmic decay. Nonlinearity \textbf{30}, 1405-1448 (2017)



\end{thebibliography}
\end{document}